\newtheorem{theorem}{Theorem}
\newtheorem{lemma}[theorem]{Lemma}
\newtheorem{corollary}[theorem]{Corollary}
\newtheorem{proposition}[theorem]{Proposition}
\title{On Multipath Fading Channels at High SNR}
\author{Tobias Koch ~~ Amos Lapidoth\\\small ETH Zurich\\\small Zurich, Switzerland\\
   \small Email: \{tkoch, lapidoth\}@isi.ee.ethz.ch
 }
\date{}
\begin{document}

\maketitle

\begin{abstract}
  \renewcommand{\thefootnote}{}
  This work studies the capacity of multipath
  fading channels. A noncoherent channel model is considered, where
  neither the transmitter nor the receiver is cognizant of the realization of the
  path gains, but both are cognizant of their statistics. It is
  shown that if the delay spread is large in the sense that the
  variances of the path gains decay exponentially or slower, then
  capacity is bounded in the signal-to-noise ratio (SNR). For such
  channels, capacity does not tend to infinity as the SNR tends to
  infinity. In contrast, if the variances of the path gains decay
  faster than exponentially, then capacity is unbounded in the SNR. It
  is further demonstrated that if the number of paths is finite, then
  at high SNR capacity grows double-logarithmically with the SNR, and the capacity
  pre-loglog, defined as the limiting ratio of capacity to
  $\log\log\SNR$ as $\SNR$ tends to infinity, is $1$ irrespective of
  the number of paths.
  \footnote{The material in this paper
  was presented in part at the 2008 IEEE Information Theory Workshop
  (ITW) Porto, Portugal, at the 2008 IEEE International Symposium on Information
  Theory (ISIT), Toronto, Canada, and at the 2008 IEEE 25-th
  Convention of Electrical and Electronics Engineers in Israel.}
\end{abstract}
\setcounter{footnote}{0}

\section{Introduction}
\label{sec:intro}
We study the capacity of discrete-time \emph{multipath fading
  channels}. In multipath fading channels, the transmitted signal
propagates along a multitude of paths, and the gains and delays of
these paths vary over time. In general, the path delays differ from
each other, and the receiver thus observes a weighted sum of delayed
replicas of the transmitted signal, where the weights are random. We
  shall slightly abuse nomenclature and refer
to each summand in the received signal as a path, and to the
corresponding weight as its path gain, even if it is in fact
composed of a multitude of paths. We consider a \emph{noncoherent} channel model, where
transmitter and receiver are cognizant of the statistics of the path
gains, but are ignorant of their realization.

Multipath fading channels arise in wireless
communications, where obstacles in the surroundings reflect the
transmitted signal and force it to propagate along multiple paths, and where
relative movements of transmitter, receiver, and obstacles lead to
time-variations of the path gains and delays.
Examples of wireless communication scenarios where the receiver
observes typically more than one path
include \emph{radio communications} (particularly if the transmitted
signal is of large bandwidth as, for example, in \emph{Ultra-Wideband} or
in \emph{CDMA}) and \emph{underwater acoustic communications}.

The capacity of noncoherent multipath fading channels has been
investigated extensively in the wideband regime, where the
signal-to-noise ratio (SNR) is typically small. It was shown by Kennedy that,
in the limit as the available bandwidth tends to infinity, the
capacity of the fading channel is the same as the capacity of the
additive white Gaussian noise (AWGN) channel of equal received power;
see \cite[Sec.~8.6]{gallager68} and references therein.

To the best of our knowledge, not much is known about the capacity of
noncoherent multipath fading channels at high SNR. For the special case of
noncoherent \emph{frequency-flat} fading channels (where we only have
\emph{one} path), it was shown by Lapidoth \& Moser
\cite{lapidothmoser03_3} that if the fading process is of finite
entropy rate, then at high SNR capacity grows double-logarithmically
in the SNR. This is much slower than the logarithmic growth of the
AWGN capacity \cite{shannon48}.

In this work, we study the high-SNR behavior of the capacity of
noncoherent \emph{multipath} fading channels (where the number of paths is typically
greater than one). We demonstrate that the capacity of such channels does not
merely grow slower with the SNR than the capacity of the AWGN
channel, but may be even \emph{bounded} in the SNR. In other words,
for such channels the capacity does not necessarily tend to infinity
as the SNR tends to infinity.

We derive a necessary and a sufficient condition for the
capacity to be bounded in the SNR. We show that if the variances of
the path gains decay \emph{exponentially or slower}, then capacity is
bounded in the SNR. In contrast, if the variances of the path
gains decay \emph{faster than exponentially}, then capacity is
unbounded in the SNR. We further show that if the number of paths is
finite, then at high SNR capacity increases double-logarithmically with the SNR, and
the capacity pre-loglog, defined as the limiting ratio of
the capacity to $\log\log\SNR$ as $\SNR$ tends to infinity, is $1$
irrespective of the number of paths.

The rest of this paper is organized as follows. We begin with a
mathematical description of the considered channel model in
Section~\ref{sec:channelmodel}. Section~\ref{sec:capacity} is devoted
to channel capacity. Our main results are summarized in
Section~\ref{sec:result}. They follow from upper bounds and
lower bounds on channel capacity, which are derived in
Sections~\ref{sec:upperproof}  and \ref{sec:lowerproof},
respectively. Section~\ref{sec:summary} concludes the paper with a
brief summary and a discussion of our results.

\section{Channel Model}
\label{sec:channelmodel}
Let $\Complex$ and $\Naturals$ denote the set of complex numbers and
the set of positive integers, respectively. We consider a
discrete-time multipath fading channel whose channel output
$Y_k\in\Complex$ at time $k\in\Naturals$ corresponding to the time-1
through time-$k$ channel inputs $x_1,\ldots,x_k\in\Complex$ is given
by
\begin{equation}
  Y_k = \sum_{\ell=0}^{k-1} H_k^{(\ell)} x_{k-\ell}+Z_k, \quad
  k\in\Naturals.
\end{equation}
Here $\{Z_k\}$ models additive noise, and $H_k^{(\ell)}$ denotes the
time-$k$ gain of the $\ell$-th path. We assume that $\{Z_k\}$ is
a sequence of independent and identically distributed (IID),
zero-mean, variance-$\sigma^2$, circularly-symmetric, complex Gaussian
random variables. For each path $\ell\in\Naturals_0$ (where
$\Naturals_0$ denotes the set of nonnegative integers), we assume that
$\bigl\{H_k^{(\ell)},\;k\in\Naturals\bigr\}$ is a zero-mean, complex stationary
process. We denote its variance and its differential entropy rate by
\begin{equation}
  \alpha_{\ell} \triangleq \E{\bigl|H_k^{(\ell)}\bigr|^2}, \qquad \ell\in\Naturals_0
\end{equation}
and
\begin{equation}
  h_{\ell} \triangleq \lim_{n\to\infty} \frac{1}{n}
  h\bigl(H_1^{(\ell)},\ldots,H_n^{(\ell)}\bigr), \quad \ell\in\Naturals_0.
\end{equation}
We shall say that the channel has a \emph{finite number of paths}, if for
some finite integer $\const{L}\in\Naturals_0$ 
\begin{equation}
  H_k^{(\ell)}=0, \quad\ell>\const{L}, \quad k\in\Naturals.
\end{equation}
We assume that $\alpha_0>0$. We further assume
\begin{equation}
  \label{eq:alpha}
  \sup_{\ell\in\Naturals_0} \alpha_{\ell} <\infty
\end{equation}
and
\begin{equation}
  \inf_{\ell\in\set{L}} h_{\ell} > -\infty,\label{eq:finiteentropy}
\end{equation}
where the set $\set{L}$ is defined as $\set{L}\triangleq
\{\ell\in\Naturals_0:\alpha_{\ell}>0\}$. (When the path gains are
Gaussian, then the latter condition \eqref{eq:finiteentropy} is equivalent to saying that the
mean-square error in predicting the present path gain from its past is
strictly positive, i.e., that the present path gain cannot be predicted
perfectly from its past.) We finally assume that the processes
\begin{equation*}
  \bigl\{H_k^{(0)},\;k\in\Naturals\bigr\},\bigl\{H_k^{(1)},\;k\in\Naturals\bigr\},\ldots
\end{equation*}
are independent (``uncorrelated scattering''); that they are jointly
independent of $\{Z_k\}$; and that the joint law of
\begin{equation*}
  \left(\{Z_k\},\bigl\{H_k^{(0)},\;k\in\Naturals\bigr\},\bigl\{H_k^{(1)},\;k\in\Naturals\bigr\},\ldots\right)
\end{equation*}
does not depend on the input sequence $\{x_k\}$. We consider a
noncoherent channel model where neither transmitter nor receiver is
cognizant of the realization of
$\bigl\{H_k^{(\ell)},\;k\in\Naturals\bigr\}$, $\ell\in\Naturals_0$,
but both are aware of their law. We do not assume that the path
gains are Gaussian.

\section{Channel Capacity}
\label{sec:capacity}
Let $A_m^n$ denote the sequence $A_m,\ldots,A_n$. We define the
capacity (in nats per channel use) as
\begin{equation}
  \label{eq:capacity}
  C(\SNR) \triangleq \varliminf_{n\to\infty}\frac{1}{n} \sup I\bigl(X_1^n;Y_1^n\bigr),
\end{equation}
where the supremum is over all joint distributions on $X_1,\ldots,X_n$
satisfying the power constraint
\begin{equation}
  \label{eq:power}
  \frac{1}{n} \sum_{k=1}^n \E{|X_k|^2} \leq \const{P},
\end{equation}
and where SNR is defined as
\begin{equation}
  \SNR \triangleq \frac{\const{P}}{\sigma^2}.
\end{equation}

By Fano's inequality, no rate above $C(\SNR)$ is achievable. (See
\cite{coverthomas91} for a definition of an achievable rate.) We do
not claim that there is a coding theorem associated with
\eqref{eq:capacity}, i.e., that $C(\SNR)$ is achievable. A coding
theorem will hold, for example, if the number of paths is finite,
and if the processes corresponding to these paths
$\bigl\{H_k^{(0)},\;k\in\Naturals\bigr\},\ldots,\bigl\{H_k^{(\const{L})},\;k\in\Naturals\bigr\}$
are jointly ergodic, see \cite[Thm.~2]{kim08}.

The special case of noncoherent frequency-flat fading channels
(where we have only one path)
was studied by Lapidoth and Moser \cite{lapidothmoser03_3}.
They showed that if the fading process
$\bigl\{H_k^{(0)},\;k\in\Naturals\bigr\}$ is ergodic, then the
capacity satisfies
\begin{equation}
  \label{eq:lapidothmoser}
  \lim_{\SNR\to\infty} \bigl\{C(\SNR) -\log\log\SNR \bigr\} = \log\pi
  +\E{\log\bigl|H_1^{(0)}\bigr|^2} - h_0
\end{equation}
(see \cite[Thm.~4.41]{lapidothmoser03_3}), where $\log(\cdot)$ denotes
the natural logarithm function. Thus, at high SNR, the
capacity of noncoherent frequency-flat fading channels grows
double-logarithmically with the SNR. Lapidoth and Moser concluded that
communicating over noncoherent frequency-flat fading channels at high
SNR is extremely power-inefficient, as one should expect to square the
SNR for every additional bit per channel use.\footnote{Note that the capacity of coherent
  fading channels (where the fading realization is known to the
  receiver) behaves logarithmically with the SNR
  \cite{ericson70}. Thus in the coherent case it suffices to double
  the SNR for every additional bit per channel use.}

In this paper, we show \emph{inter alia} that communicating over noncoherent
multipath fading channels at high SNR is not merely power-inefficient,
but may be even worse: if the delay spread is large in the sense that
the sequence $\{\alpha_{\ell}\}$ (which describes the variances of the
path gains) decays exponentially or slower,
then capacity is bounded in the SNR. For such channels,
capacity does not tend to infinity as the SNR tends to infinity. The
main results of this paper are presented in the following section.

\section{Main Results}
\label{sec:result}

Our main results are a sufficient and a necessary condition
on $\{\alpha_{\ell}\}$ for $C(\SNR)$ to be bounded in
$\SNR$, as well as a characterization of the capacity pre-loglog when
the number of paths is finite.

\begin{theorem}
  \label{thm:bounded}
  Consider the above channel model. Then
  \begin{IEEEeqnarray}{rcCl}
    (i) & \quad\biggl(\varliminf_{\ell\to\infty}
    \frac{\alpha_{\ell+1}}{\alpha_{\ell}}>0\biggr)\quad & \Longrightarrow &
    \quad\biggl(\,\sup_{\SNR>0} C(\SNR)<\infty\biggr)\\
    (ii) & \quad \biggl(\lim_{\ell\to\infty}\frac{1}{\ell}\log\frac{1}{\alpha_{\ell}}=\infty\biggr)\quad
    & \Longrightarrow & \quad\biggl(\,\sup_{\SNR>0} C(\SNR) = \infty\biggr),
  \end{IEEEeqnarray}
  where we define $a/0\triangleq \infty$ for every $a>0$ and
  $0/0\triangleq 0$.
\end{theorem}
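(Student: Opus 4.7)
The plan is to study $I(X_1^n;Y_1^n)$ under the power constraint~\eqref{eq:power} in the high-$\SNR$ regime, deriving a uniform upper bound on $\frac{1}{n}I(X_1^n;Y_1^n)$ for part~(i) and exhibiting a specific input strategy for part~(ii).

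For part~(i), the underlying intuition is that, whenever signal energy is injected into the channel, under the hypothesis $\liminf_\ell\alpha_{\ell+1}/\alpha_\ell>0$ the weights $\{\alpha_\ell\}$ spread a constant fraction of this energy as multipath self-interference across subsequent time slots, so that the signal-to-interference ratio at any given receive time cannot be made arbitrarily large. Formally, starting from $I(X_1^n;Y_1^n)=h(Y_1^n)-h(Y_1^n\mid X_1^n)$, I would upper bound $h(Y_1^n)\le\sum_k h(Y_k)\le\sum_k\log\bigl(\pi e\,\Var{Y_k}\bigr)$, with $\Var{Y_k}\le\sigma^2+c_1\const{P}$ for a finite constant $c_1$ depending on $\{\alpha_\ell\}$. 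For the second term, applying the chain rule gives $h(Y_1^n\mid X_1^n)=\sum_k h(Y_k\mid Y_1^{k-1},X_1^n)$, and I would lower bound each summand by the log of the second moment of the conditional innovation of $Y_k$. The key step is to show that this innovation variance is at least $\sigma^2+c_2\sum_\ell\alpha_\ell|X_{k-\ell}|^2$ for some $c_2>0$ depending only on the fading statistics and on the decay hypothesis, combining the positivity of the one-step prediction-error variances of the path processes (guaranteed by~\eqref{eq:finiteentropy}) with the fact that the ratios $\alpha_{\ell+1}/\alpha_\ell$ remain bounded away from zero. Putting the bounds together yields a bound of the form $\log(c_1/c_2)+o(1)$, finite and uniform in $\SNR$.

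For part~(ii), I would proceed by reduction to a finite-paths channel, for which the pre-loglog is known (elsewhere in the paper) to equal one. Given $\SNR$, I would (a) choose $L=L(\SNR)$ large enough that $\const{P}\sum_{\ell>L}\alpha_\ell\le\sigma^2$, which is possible under the super-exponential hypothesis $\lim_\ell\frac{1}{\ell}\log(1/\alpha_\ell)=\infty$ and forces $L(\SNR)\to\infty$ as $\SNR\to\infty$; (b) treat the tail $\sum_{\ell>L}H_k^{(\ell)}X_{k-\ell}$ as additive noise independent of the truncated-channel output and, appealing to the worst-case-noise property for bounded second moment, replace it by a Gaussian of matched variance, thereby lower bounding $I(X_1^n;Y_1^n)$ by the mutual information of the finite-paths channel with enhanced noise variance at most $2\sigma^2$; and (c) apply the pre-loglog-equals-one result for the truncated channel to conclude that the resulting lower bound behaves like $\log\log\SNR+O(1)$, which tends to infinity.

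The main technical obstacle in part~(i) is the innovation-variance lower bound: the required inequality between the one-step conditional variance of $Y_k$ given $(Y_1^{k-1},X_1^n)$ and $\sum_\ell\alpha_\ell|X_{k-\ell}|^2$ must be established without assuming Gaussianity of the fading processes, which will likely rely on spectral-factorization (Szeg\H{o}-type) identities combined with~\eqref{eq:finiteentropy}, together with an averaging argument that exploits the slow decay of $\{\alpha_\ell\}$. In part~(ii), the main obstacle is a rigorous implementation of the worst-case-noise step when the capacity-achieving inputs for the truncated channel are non-Gaussian and peaky, as is typical in the $\log\log\SNR$ regime; one safe route is to directly import the coding construction for the finite-paths channel and verify that decoding still succeeds in the presence of the truncated tail as an exogenous, second-moment-bounded disturbance.
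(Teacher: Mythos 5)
Both halves of your proposal contain genuine gaps. For part~(i), two things break. First, your plan to lower bound $h\big(Y_k\big|Y_1^{k-1},X_1^n\big)$ ``by the log of the second moment of the conditional innovation'' points the wrong way: a second moment yields an \emph{upper} bound on differential entropy (Gaussian maximization), never a lower bound, and a random variable of large variance can have arbitrarily negative differential entropy. The paper obtains the needed lower bound instead from the entropy power inequality combined with the finite differential entropy rates \eqref{eq:finiteentropy}, giving $h\big(Y_k\big|X_1^n,Y_1^{k-1}\big)\geq \E{\log\bigl(\sigma^2+\sum_{\ell}\alpha_{\ell}|X_{k-\ell}|^2\bigr)}+\inf_{\ell\in\set{L}}(h_{\ell}-\log\alpha_{\ell})$. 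Second, and more fundamentally, even granting such a bound, pairing it with $h(Y_k)\leq\log\bigl(\pi e \Var{Y_k}\bigr)$ cannot produce a bound uniform in $\SNR$: the residual per-symbol term is $\log\bigl(\sigma^2+\sum_{\ell}\alpha_{\ell}\E{|X_{k-\ell}|^2}\bigr)-\E{\log\bigl(\sigma^2+c_2\sum_{\ell}\alpha_{\ell}|X_{k-\ell}|^2\bigr)}$, and for peaky inputs (e.g., $\log|X_k|^2$ uniform on $[0,\log\const{P}]$, which is exactly the relevant regime here) this Jensen gap grows like $\tfrac12\log\SNR$ --- this looseness is precisely why flat-fading capacity is still unbounded. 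The paper's mechanism is different: it applies the duality bound \eqref{eq:duality} with a heavy-tailed Cauchy-like output density whose scale is anchored to the \emph{delayed output} $|Y_{k-\ell_0}|^2$; the hypothesis enters as $\tilde{\rho}\,\alpha_{\ell}\leq\alpha_{\ell+\ell_0}$, which makes $\E{\log\bigl(\tilde{\rho}|Y_{k-\ell_0}|^2+|Y_k|^2\bigr)}$ cancel against the EPI term, and the leftover $\tfrac12\E{\log|Y_k|^2}-\tfrac12\E{\log|Y_{k-\ell_0}|^2}$ telescopes over $k$. Your signal-to-interference intuition is the right one, but marginal-variance bookkeeping cannot realize it.

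For part~(ii), the reduction to ``pre-loglog equals one for the truncated channel'' is invalid because Theorem~\ref{thm:finite} is an asymptotic statement for each \emph{fixed} number of paths, whereas your truncation level $\const{L}(\SNR)$ tends to infinity with $\SNR$; no uniformity in $\const{L}$ is available. A decisive sanity check: your steps (a)--(c) apply verbatim to geometric decay $\alpha_{\ell}=\rho^{\ell}$, since $\const{L}(\const{P})\approx\log\const{P}/\log(1/\rho)$ also satisfies $\const{P}\sum_{\ell>\const{L}}\alpha_{\ell}\leq\sigma^2$ there --- so your argument would ``prove'' unbounded capacity for a channel that part~(i) shows has bounded capacity. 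The superexponential hypothesis must therefore be used quantitatively, not merely to enable step (a). This is what the paper's Proposition~\ref{prop:lowerbound} does: an explicit scheme (guard interval of $\const{L}$ zeros followed by $\tau$ staircase-peaky symbols, analyzed via Lapidoth's lemma, which needs only second-moment control of the interference and sidesteps any Gaussian worst-case-noise step --- a step that is in any case not valid for arbitrary non-Gaussian inputs) yields $C(\SNR)\geq\frac{\tau}{\const{L}+\tau}\log\log\const{P}^{1/\tau}+\frac{\tau}{\const{L}+\tau}\Upsilon$ with $\Upsilon$ as in \eqref{eq:upsilon}. Condition (ii) then gives, for every $0<\varrho<1$, $\const{L}(\const{P})\leq\log\const{P}/\log(1/\varrho)+O(1)$, so choosing $\tau=\const{L}(\const{P})$ makes the bound tend to $\tfrac12\log\log(1/\varrho)+\tfrac12\Upsilon$, an arbitrarily large \emph{constant} as $\varrho\downarrow 0$, whence $\sup_{\SNR>0}C(\SNR)=\infty$. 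Note this is strictly weaker than your claimed $\log\log\SNR+O(1)$ growth, which overreaches: for, say, $\alpha_{\ell}=e^{-\ell\log\ell}$ one gets $\const{L}(\const{P})\approx\log\const{P}/\log\log\const{P}$ and this scheme yields only triple-log-type growth; condition (ii) guarantees unboundedness, not a pre-loglog of one.
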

\begin{proof}
  Part~(i) is proven in Section~\ref{sub:upper1}, and Part~(ii) is
  proven in Sections~\ref{sub:lower} \& \ref{sub:neccondbounded}.
\end{proof}

By noting that
\begin{equation*}
  \biggl(\lim_{\ell\to\infty}
  \frac{\alpha_{\ell+1}}{\alpha_{\ell}}=0\biggr) \quad \Longrightarrow
  \quad \biggl(\lim_{\ell\to\infty}\frac{1}{\ell}\log\frac{1}{\alpha_{\ell}}=0\biggr)
\end{equation*}
we obtain from Theorem~\ref{thm:bounded} the immediate corollary:

\begin{corollary}
  \label{cor:bounded}
  Consider the above channel model. Then
  \begin{IEEEeqnarray}{rcCl}
    (i) & \quad\biggl(\varliminf_{\ell\to\infty}
    \frac{\alpha_{\ell+1}}{\alpha_{\ell}}>0\biggr)\quad & \Longrightarrow &
    \quad\biggl(\,\sup_{\SNR>0} C(\SNR)<\infty\biggr)\label{eq:cor1}\\
    (ii) & \quad \biggl(\lim_{\ell\to\infty} \frac{\alpha_{\ell+1}}{\alpha_{\ell}}=0\biggr)\quad
    & \Longrightarrow & \quad\biggl(\,\sup_{\SNR>0} C(\SNR) = \infty\biggr),\label{eq:cor2}
  \end{IEEEeqnarray}
  where we define $a/0\triangleq \infty$ for every $a>0$ and
  $0/0\triangleq 0$.
\end{corollary}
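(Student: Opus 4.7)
The corollary is stated as an ``immediate'' consequence of Theorem~\ref{thm:bounded}, and indeed the plan is to verify only the key implication
\[
  \biggl(\lim_{\ell\to\infty}\frac{\alpha_{\ell+1}}{\alpha_{\ell}}=0\biggr)
  \;\Longrightarrow\;
  \biggl(\lim_{\ell\to\infty}\frac{1}{\ell}\log\frac{1}{\alpha_{\ell}}=\infty\biggr),
\]
after which Parts~(i) and~(ii) of the corollary follow from Parts~(i) and~(ii) of the theorem, respectively. Part~(i) of the corollary is syntactically identical to Part~(i) of the theorem, so nothing needs to be said there.

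For Part~(ii), I would separate two cases according to whether $\alpha_{\ell}$ eventually vanishes. Under the stated convention $a/0\triangleq\infty$ for $a>0$, the hypothesis $\alpha_{\ell+1}/\alpha_{\ell}\to 0$ forces the following: once $\alpha_{\ell}=0$ for some sufficiently large $\ell$, all subsequent $\alpha_{\ell'}$ must vanish as well (otherwise the ratio $\alpha_{\ell+1}/\alpha_{\ell}$ would equal $\infty$ for arbitrarily large $\ell$). In that case $\log(1/\alpha_{\ell})=\infty$ eventually, so $(1/\ell)\log(1/\alpha_\ell)\to\infty$ trivially. In the complementary case $\alpha_{\ell}>0$ for all $\ell$ large enough, the argument is a standard telescoping one: fix an arbitrary $M>0$ and choose $\ell_0$ so that $\alpha_{\ell+1}/\alpha_{\ell}\leq e^{-M}$ for all $\ell\geq \ell_0$. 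Then
\[
  \alpha_{\ell} \;=\; \alpha_{\ell_0}\prod_{j=\ell_0}^{\ell-1}\frac{\alpha_{j+1}}{\alpha_{j}} \;\leq\; \alpha_{\ell_0}\,e^{-M(\ell-\ell_0)}, \qquad \ell>\ell_0,
\]
so that
\[
  \frac{1}{\ell}\log\frac{1}{\alpha_{\ell}} \;\geq\; M\Bigl(1-\frac{\ell_0}{\ell}\Bigr) - \frac{\log\alpha_{\ell_0}}{\ell}.
\]
Letting $\ell\to\infty$ gives $\varliminf_{\ell\to\infty}(1/\ell)\log(1/\alpha_{\ell})\geq M$, and since $M$ was arbitrary the limit equals $+\infty$.

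With the implication established, Part~(ii) of the corollary follows directly by feeding its hypothesis into Part~(ii) of Theorem~\ref{thm:bounded}. There is no real obstacle here: the only subtlety is handling the boundary case where $\alpha_\ell$ is eventually zero under the division-by-zero convention, and once that is dispatched the remainder is an entirely elementary estimate on a telescoping product. No new channel-theoretic input is needed.
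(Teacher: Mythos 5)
Your proof is correct and takes essentially the same route as the paper: the paper deduces the corollary from Theorem~\ref{thm:bounded} by asserting exactly the implication you establish, namely $\bigl(\lim_{\ell\to\infty}\alpha_{\ell+1}/\alpha_{\ell}=0\bigr)\Rightarrow\bigl(\lim_{\ell\to\infty}\frac{1}{\ell}\log\frac{1}{\alpha_{\ell}}=\infty\bigr)$ (note that the paper's display misprints the right-hand side as $=0$; it must read $=\infty$ for Part~(ii) of the theorem to apply, as in your version). Your telescoping estimate and the careful handling of the $a/0\triangleq\infty$ and $0/0\triangleq 0$ conventions simply supply the details the paper leaves to the reader.
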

For example, if
\begin{equation}
  \alpha_{\ell} = e^{-\ell}, \quad \ell\in\Naturals_0,
\end{equation}
then
\begin{equation}
  \lim_{\ell\to\infty} \frac{\alpha_{\ell+1}}{\alpha_{\ell}} = \frac{1}{e} >0
\end{equation}
and it follows from Part~(i) of Corollary~\ref{cor:bounded} that the capacity is bounded
in the SNR. On the other hand, if
\begin{equation}
  \alpha_{\ell} = \exp\bigl(-\ell^{\kappa}\bigr), \quad \ell\in\Naturals_0
\end{equation}
for some $\kappa>1$, then
\begin{equation}
  \lim_{\ell\to\infty} \frac{\alpha_{\ell+1}}{\alpha_{\ell}} =
  \lim_{\ell\to\infty} \exp\bigl(\ell^{\kappa}-(\ell+1)^{\kappa}\bigr)
  = 0
\end{equation}
and it follows from Part~(ii) of Corollary~\ref{cor:bounded} that the
capacity is unbounded in the SNR. Roughly speaking, we can say that when $\{\alpha_{\ell}\}$
decays \emph{exponentially or slower}, then $C(\SNR)$ is bounded in
$\SNR$, and when $\{\alpha_{\ell}\}$ decays \emph{faster than exponentially},
then $C(\SNR)$ is unbounded in $\SNR$.

The condition on the left-hand side (LHS) of \eqref{eq:cor2} is surely
satisfied if the channel has a finite number of paths, as in this case
\begin{equation*}
  H_k^{(\ell)}=0,\quad\ell>\const{L}, \quad k\in\Naturals,
\end{equation*}
which implies
\begin{equation*}
  \alpha_{\ell}=0,\quad \ell>\const{L} \qquad \textnormal{and}\qquad
  \frac{\alpha_{\ell+1}}{\alpha_{\ell}} = \frac{0}{0} \triangleq 0,
  \quad \ell>\const{L}.
\end{equation*}
Consequently, it follows from Corollary~\ref{cor:bounded} that if the
number of paths is finite, then
$C(\SNR)$ is unbounded in $\SNR$. However, for this case the high-SNR
behavior of the capacity can be characterized more accurately:
Theorem~\ref{thm:finite} ahead shows that if the number of paths is
finite, then the capacity pre-loglog, defined as
\begin{equation}
  \Lambda \triangleq \varlimsup_{\SNR\to\infty} \frac{C(\SNR)}{\log\log\SNR},
\end{equation}
is $1$ irrespective of the number of paths. The pre-loglog in this
case is thus the same as for frequency-flat fading.

\begin{theorem}
  \label{thm:finite}
  Consider the above channel model. Further assume that the number of
  paths is finite. Then, irrespective of the number of paths, the
  capacity pre-loglog is given by
  \begin{equation}
    \Lambda=\lim_{\SNR\to\infty}\frac{C(\SNR)}{\log\log\SNR} = 1.
  \end{equation}
\end{theorem}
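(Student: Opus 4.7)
The claim $\Lambda=1$ decomposes into the upper bound $\varlimsup_{\SNR\to\infty} C(\SNR)/\log\log\SNR\leq 1$ and the matching lower bound $\varliminf_{\SNR\to\infty} C(\SNR)/\log\log\SNR\geq 1$. Morally, the pre-loglog is inherited from the single-path case \eqref{eq:lapidothmoser}, and the plan is to reduce both directions to that Lapidoth--Moser result: the lower bound by constructing a block scheme whose effective rate is close to the single-path rate, and the upper bound by a genie argument that strips away all but the first path.

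For the \emph{lower bound}, I would use a block-transmission scheme with a guard interval. Partition time into blocks of length $T=B+\const{L}$, transmit $B$ data symbols in the first $B$ slots of each block, and use the last $\const{L}$ slots as a zero-valued guard interval. The guard removes inter-block interference, so the channel decouples into independent within-block vector channels of $B$ inputs and $T$ outputs, whose coefficient matrices are banded lower-triangular with entries drawn from the (jointly independent) path processes. For each fixed $B$, I would adopt an input distribution on the $B$ data symbols in the spirit of \cite{lapidothmoser03_3} (e.g.\ IID with $\log|X_k|^2$ nearly uniform on a large interval) and show, via a SIMO/MIMO-style noncoherent capacity estimate, that the per-block mutual information is at least $B\log\log\SNR+O_B(1)$. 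Normalizing by $T$ gives
\[
   C(\SNR)\;\geq\;\frac{B}{B+\const{L}}\,\log\log\SNR+O_B(1),
\]
and letting $\SNR\to\infty$ first and $B\to\infty$ second yields $\Lambda\geq 1$.

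For the \emph{upper bound}, I would use a genie argument. Reveal to the receiver the full collection $\bigl\{H_k^{(\ell)}:k\in\Naturals,\;\ell=1,\ldots,\const{L}\bigr\}$; since this is independent of the input, it cannot decrease $I(X_1^n;Y_1^n)$. Conditional on the genie, the non-first-path contribution $\sum_{\ell\geq 1}H_k^{(\ell)}X_{k-\ell}$ is a known function of $X_1^{k-1}$, so the chain rule yields
\[
   I\bigl(X_1^n;Y_1^n\bigr)\;\leq\;\sum_{k=1}^{n} I\bigl(X_k;\,Y_k,\ldots,Y_{k+\const{L}}\,\big|\,X_1^{k-1},\,\{H^{(\ell)}\}_{\ell\geq 1}\bigr).
\]
After also conditioning on the future inputs $X_{k+1},\ldots,X_{k+\const{L}}$ that enter the later outputs, each summand reduces to the mutual information across a single-input, $(\const{L}+1)$-output noncoherent frequency-flat fading channel. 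An extension of the Lapidoth--Moser duality-based upper bound from SISO to SIMO then bounds each term by $\log\log\SNR+O(1)$, and summing gives $C(\SNR)\leq\log\log\SNR+O(1)$.

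The \emph{main obstacle} is on the lower-bound side: while block-constant (piecewise-constant) MIMO fading theory readily delivers per-block rates of order $B\log\log\SNR$, in our model the path gains vary stationarily within a block, so the classical MIMO coherence arguments do not apply off the shelf, and a more delicate direct entropy computation exploiting the finite-entropy-rate hypothesis \eqref{eq:finiteentropy} will be needed. On the upper-bound side the subtlety is in showing that, after the genie, the extra $\const{L}$ observations appended to each $Y_k$ contribute only an $O(1)$ correction to the per-symbol mutual information rather than an additional $\log\log\SNR$ term, so that the pre-loglog remains exactly $1$ and not a larger multiple of it.
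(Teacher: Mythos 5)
Your overall skeleton (guard-interval blocks for the direct part, a Lapidoth--Moser-type per-symbol bound for the converse) points in the right direction, but both halves have genuine gaps. On the lower bound, the ``main obstacle'' you flag is precisely the missing idea, and it is not a technicality that a ``more delicate entropy computation'' will absorb. With IID data symbols all drawn from the \emph{same} log-uniform law on a large interval, the intra-block ISI $\sum_{\ell\geq 1}H_k^{(\ell)}X_{k-\ell}$ from earlier in-block symbols can exceed the current symbol's power by a factor of order $\const{P}$, so treating it as noise costs order $\log\const{P}$ per symbol and destroys the $\log\log\SNR$ rate; and no off-the-shelf noncoherent MIMO estimate applies since the gains vary within the block. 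The paper's resolution is a \emph{staircase power allocation}: the $\nu$-th data symbol of each block has $\log|\tilde{X}_{b\tau+\nu}|^2$ uniform on $\bigl[\log\const{P}^{(\nu-1)/\tau},\log\const{P}^{\nu/\tau}\bigr]$, so every symbol is at least as strong as each of its in-block interferers, and the residual interference from earlier blocks is killed by the guard length condition \eqref{eq:Lcondition}. Then, rather than any vector-channel analysis, the chain rule plus \emph{discarding} all outputs except the one matched to each data symbol reduces everything to scalar channels $H^{(0)}X+W$ with bounded interference-to-signal ratio, and Lemma~\ref{lemma} (Lemma~4 of \cite{lapidoth05_2}) gives $\log\log\const{P}^{1/\tau}+O(1)$ per data symbol, i.e., \eqref{eq:proplowerbound}; letting $\tau\to\infty$ yields $\Lambda\geq 1$. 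Without this (or an equivalent) device, your per-block claim $B\log\log\SNR+O_B(1)$ is unsupported.

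The upper bound as written contains a step that fails. Granting the genie $G=\{H^{(\ell)}\}_{\ell\geq 1}$ (legitimate, since $G$ is independent of the input), the chain rule gives $I(X_1^n;Y_1^n\mid G)=\sum_k I\bigl(X_k;Y_1^n\bigm|X_1^{k-1},G\bigr)$, and truncating the observation window to $Y_k,\ldots,Y_{k+\const{L}}$ can only \emph{decrease} each term --- so your displayed inequality runs in the wrong direction and does not upper-bound $I(X_1^n;Y_1^n)$. The truncation is not harmless here precisely because $\{H_k^{(0)}\}$ has memory and is \emph{not} revealed: the discarded past outputs help the receiver predict $H_k^{(0)}$, which is the entire mechanism behind the entropy-rate term in \eqref{eq:lapidothmoser}. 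Moreover, even after conditioning on $X_{k+1},\ldots,X_{k+\const{L}}$, the outputs $Y_{k+j}$, $j\geq 1$, still contain the unknown terms $H_{k+j}^{(0)}X_{k+j}$, so the subchannel is not a clean single-input SIMO frequency-flat channel, and the optimized input law makes this residual interference statistically uncontrolled. The paper's converse needs no genie: it decomposes $I(X_1^n;Y_1^n)=\sum_k I\bigl(X_1^n;Y_k\bigm|Y_1^{k-1}\bigr)$ (keeping the past outputs as conditioning, which is the correct way to ``give'' rather than ``take''), applies the duality bound \cite[Eq.~(27)]{lapidothmoser03_3} with $\xi=\bigl(1+\log(1+\alpha\,\SNR)\bigr)^{-1}$, and lower-bounds $h\bigl(Y_k\bigm|X_1^n,Y_1^{k-1}\bigr)$ via the entropy power inequality and the entropy-rate hypothesis \eqref{eq:finiteentropy} as in \eqref{eq:U4}; the matching $\E{\log(\sigma^2+\sum_{\ell}\alpha_{\ell}|X_{k-\ell}|^2)}$ terms cancel, giving $C(\SNR)\leq\log\log\SNR+O(1)$ under the weaker hypothesis $\sum_{\ell}\alpha_{\ell}<\infty$, not just finitely many paths.
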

\begin{proof}
  See Section~\ref{sub:upper2} for the converse and
  Sections~\ref{sub:lower} \& \ref{sub:preloglog} for the direct
  part.
\end{proof}

When studying multipath fading channels at low or at moderate
SNR, it is often assumed that the channel has a finite number
of paths, even if the number of paths is in reality infinite. This
assumption is commonly justified by saying that only the first $(\const{L}+1)$
paths are relevant, since the variances of the remaining paths are
typically small and hence the influence of these paths on the capacity
is marginal. As we see from Theorems~\ref{thm:bounded} \&
\ref{thm:finite}, this argument is not valid anymore when studying
multipath fading channels at high SNR. In fact, when for example the
sequence of variances $\{\alpha_{\ell}\}$ decays exponentially, then
according to Part~(i) of Theorem~\ref{thm:bounded} the capacity is
bounded in the SNR. However, if we consider only the first $(\const{L}+1)$
paths and set the other paths to zero, then it follows from
Theorem~\ref{thm:finite} that, irrespective of $\const{L}$, the
capacity increases double-logarithmically with the SNR. Thus, even
though the variances of the remaining paths $\alpha_{\ell}$,
$\ell>\const{L}$ can be made arbitrarily small by choosing $\const{L}$
sufficiently large, these paths may have a significant influence on the
capacity behavior at high SNR.

The reason why paths with a small variance can affect the
capacity behavior is that the capacity depends on the variance of the
product between the path gains and the transmitted signal and not on
the variance of the path gains only. Since at high SNR the variance
of $\sum_{\ell=\const{L}+1}^{\infty} H_k^{(\ell)}X_{k-\ell}$ might be
huge even if the variance of $\sum_{\ell=\const{L}+1}^{\infty}
H_k^{(\ell)}$ is small, the relevance of a path is determined not only
by its own variance but also by the power available at the
transmitter.
The number of paths that are needed to approximate a multipath
channel typically depends on the $\SNR$ and may grow to infinity as the
$\SNR$ tends to infinity.

In order to prove the above results, we derive upper and lower bounds on
the capacity. Since these bounds may also be of independent
interest, we summarize them in the following propositions.

\begin{proposition}[Upper Bounds] $\quad$
  \label{prop:upperbounds}
  \begin{enumeratePart}
  \item\label{prop:upperbound1}
    Consider the above channel model. Further assume that for some $0<\rho<1$
    and some $\ell_0\in\Naturals$
    \begin{equation*}
      \alpha_{\ell_0}>0 \qquad \textnormal{and} \qquad \frac{\alpha_{\ell+1}}{\alpha_{\ell}} \geq \rho, \quad \ell\geq\ell_0.
    \end{equation*}
    Then the capacity $C(\SNR)$ is upper bounded by
    \begin{equation}
      \label{eq:propupperbound1}
      C(\SNR) \leq \log\frac{2\pi^2}{\sqrt{\tilde{\rho}}} -
      \inf_{\ell\in\set{L}}(h_{\ell}-\log\alpha_{\ell}),\quad \SNR\geq 0,
    \end{equation}
    where
    \begin{equation}
      \tilde{\rho} =
      \min\Bigl\{\rho^{\ell_0-1}\frac{\alpha_{\ell_0}}{\max_{0\leq
          \ell'<\ell_0} \alpha_{\ell'}},\rho^{\ell_0}\Bigr\}.
    \end{equation}
  \item\label{prop:upperbound2}
    Consider the above channel model. Further assume that
    \begin{equation}
      \label{eq:alphasum}
      \sum_{\ell=0}^{\infty} \alpha_{\ell} \triangleq\alpha < \infty.
    \end{equation}
    Then
    \begin{equation}
      \varlimsup_{\SNR\to\infty} \bigl\{C(\SNR) -
      \log\log\SNR\big\} \leq 1+\log\pi - \inf_{\ell\in\set{L}}(h_{\ell}-\log\alpha_{\ell}).
    \end{equation}
  \end{enumeratePart}
  \end{proposition}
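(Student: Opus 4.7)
The plan is to use the Lapidoth--Moser duality upper bound from \cite{lapidothmoser03_3}: for every joint distribution on $(X^n,Y^n)$ and every auxiliary output distribution $R$ on $\Complex^n$,
\[
I(X^n;Y^n)\leq \E\!\left[\log\frac{p_{Y^n\mid X^n}(Y^n\mid X^n)}{r(Y^n)}\right],
\]
so that the real content of both parts lies in (a) choosing the reference $R$ wisely, and (b) evaluating the expected log-likelihood ratio sharply enough to expose the quantities $h_\ell$ and $\alpha_\ell$ appearing in the statement.

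For Part~(i), I would fix an arbitrary $\ell^*\in\set{L}$ and adopt a \emph{single-path view} of the channel by writing $Y_k=H_k^{(\ell^*)}x_{k-\ell^*}+V_k$, where $V_k\triangleq\sum_{\ell\neq\ell^*}H_k^{(\ell)}x_{k-\ell}+Z_k$ lumps together noise and self-interference. I would then take $R$ to be a product of scale-invariant heavy-tailed marginals (a density proportional to $|y|^{-2}(\log|y|)^{-2}$ above a threshold, blended with a bounded reference below it, exactly as in the single-path analysis). With that choice, the duality integrand splits into: a conditional-entropy term, which after conditioning on all path gains other than $\ell^*$ and invoking the entropy-power-type identity is lower bounded by $h_{\ell^*}+\log|x_{k-\ell^*}|^2$; a $\log\E|Y_k|^2$-type term; and a deterministic normalizing constant. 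The decay hypothesis $\alpha_{\ell+1}/\alpha_\ell\geq\rho$ for $\ell\geq\ell_0$ is used precisely to produce the lower estimate $\E[\,|Y_k|^2\mid X^n\,]\geq\tilde\rho\,\alpha_{\ell^*}|x_{k-\ell^*}|^2$ by isolating a single dominant term; this makes the $\log|x_{k-\ell^*}|^2$ contributions cancel, leaving behind the $\SNR$-independent constant $\log(2\pi^2/\sqrt{\tilde\rho})-(h_{\ell^*}-\log\alpha_{\ell^*})$. Taking the infimum over $\ell^*\in\set{L}$ yields the claim.

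For Part~(ii), the same machinery is used with the same reference $R$, but now the summability assumption $\sum_\ell\alpha_\ell=\alpha<\infty$ bounds the output second moment by $\alpha\const{P}+\sigma^2$, i.e., by a quantity of order $\SNR$. The term $\E\log|Y_k|^2$ is then controlled by $\log\E|Y_k|^2\leq\log(\alpha\const{P}+\sigma^2)$ via Jensen's inequality (this is where the additive $1$ enters, coming from the slack in Jensen combined with the $\log\pi$ factor from the reference distribution's normalization), and the heavy-tailed normalization of $R$ converts $\log(\alpha\const{P}+\sigma^2)$ into $\log\log\SNR$ plus lower-order terms.

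The main obstacle I anticipate is the careful bookkeeping in evaluating the duality integrand. Concretely, one must lower bound $h(Y_k\mid X^n,Y_1^{k-1})$ by \emph{exactly} $h_{\ell^*}+\log|x_{k-\ell^*}|^2$ (up to an $\alpha_{\ell^*}$-dependent constant) without losing the extra entropy contributed by the other paths, and simultaneously upper bound $\E\log|Y_k|^2$ by extracting the \emph{same} $\log|x_{k-\ell^*}|^2$ factor from a large noisy sum of scaled inputs. The hypothesis $\alpha_{\ell+1}/\alpha_\ell\geq\rho$ is precisely what allows this extraction to be done uniformly in the input and in $k$, so that the two $\log|x_{k-\ell^*}|^2$ terms cancel and a finite, $\SNR$-independent remainder is left.
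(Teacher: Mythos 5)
Your plan for Part~(i) contains a genuine gap: a \emph{fixed} product reference distribution of scale-invariant heavy-tailed marginals cannot yield an $\SNR$-independent bound. With a density proportional to $|y|^{-2}(\log|y|)^{-2}$ above a threshold, the duality integrand contributes $-\log r(Y_k)\approx \E{\log|Y_k|^2}+2\,\E{\log\log|Y_k|}$; after the $\E{\log|Y_k|^2}$ term cancels against the conditional-entropy lower bound, you are left with $2\,\E{\log\log|Y_k|}\approx 2\log\log\SNR$, which diverges---this reference is precisely the one that produces the $\log\log\SNR$ bound of flat fading (and of Part~(ii)), and no reference that is a fixed product over time can do better, since already the single-letter noncoherent fading channel has mutual information growing like $\log\log\SNR$. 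The idea missing from your proposal is that the paper makes the reference for $Y_k$ \emph{conditional on the past output}: given $Y_1^{k-1}=y_1^{k-1}$ it uses the density $\frac{\sqrt{\beta}}{\pi^2|y_k|}\frac{1}{1+\beta|y_k|^2}$ with scale $\beta=1/(\tilde{\rho}|y_{k-\ell_0}|^2)$. The decay hypothesis then enters not through your estimate $\Econd{|Y_k|^2}{X_1^n}\geq\tilde{\rho}\,\alpha_{\ell^*}|x_{k-\ell^*}|^2$---which is trivially true for any channel and produces no useful cancellation---but through the shift inequality $\tilde{\rho}\,\alpha_{\ell}\leq\alpha_{\ell+\ell_0}$, which lets the echo term $\tilde{\rho}\,\Econd{|Y_{k-\ell_0}|^2}{X_1^k}$ be absorbed into $\sigma^2+\sum_{\ell}\alpha_{\ell}|X_{k-\ell}|^2$ at a cost of $\log 2$; the residual $\frac{1}{2}\E{\log|Y_k|^2}-\frac{1}{2}\E{\log|Y_{k-\ell_0}|^2}$ then telescopes over $k$, leaving only boundary terms of order $\frac{\ell_0}{n}\log(1+\sup_{\ell}\alpha_{\ell}\,n\,\SNR)$ that vanish as $n\to\infty$. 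Boundedness is an interplay between the current output and its echo $\ell_0$ steps back; lumping the other paths into noise $V_k$ destroys exactly this structure.

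The single-path bookkeeping is also too lossy in both parts. Lower bounding $h\bigl(Y_k\bigm|X_1^n,Y_1^{k-1}\bigr)$ by $h_{\ell^*}+\log|x_{k-\ell^*}|^2$ plus a constant loses the full-sum term: the cancellation driving both parts requires the entropy power inequality applied across \emph{all} active paths and the noise (the paper's Appendix), which yields $h\bigl(Y_k\bigm|X_1^n,Y_1^{k-1}\bigr)\geq\E{\log\bigl(\sigma^2+\sum_{\ell}\alpha_{\ell}|X_{k-\ell}|^2\bigr)}+\inf_{\ell\in\set{L}}(h_{\ell}-\log\alpha_{\ell})$ and thus matches the Jensen bound $\E{\log|Y_k|^2}\leq\E{\log\bigl(\sigma^2+\sum_{\ell}\alpha_{\ell}|X_{k-\ell}|^2\bigr)}$ term for term; with only one path extracted, the mismatch $\E{\log\bigl(\sigma^2+\sum_{\ell}\alpha_{\ell}|X_{k-\ell}|^2\bigr)}-\log\bigl(\alpha_{\ell^*}|x_{k-\ell^*}|^2\bigr)$ is unbounded in $\SNR$, and in Part~(ii) the decay hypothesis you invoke to isolate a dominant term is not even assumed. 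Note also that your ``taking the infimum over $\ell^*$'' points the wrong way: a valid per-$\ell^*$ bound minimized over $\ell^*$ would give the constant $-\sup_{\ell\in\set{L}}(h_{\ell}-\log\alpha_{\ell})$, strictly stronger than the claimed $-\inf_{\ell\in\set{L}}(h_{\ell}-\log\alpha_{\ell})$; the infimum arises because the EPI constant must hold uniformly over whichever paths are active at each $k$. Your outline of Part~(ii) does track the paper---the Gamma-family bound \cite[Eq.~(27)]{lapidothmoser03_3} with $\xi=1/(1+\log(1+\alpha\,\SNR))$, second moment at most $\sigma^2+\alpha\const{P}$, Jensen, and the additive $1$ arising from $\xi\bigl(1+\log(1+\alpha\,\SNR)\bigr)=1$---and it becomes the paper's proof once the full-paths EPI bound replaces the single-path one.
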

  \begin{proof}
    Part~\ref{prop:upperbound1} is proven in Section~\ref{sub:upper1},
    and Part~\ref{prop:upperbound2} in Section~\ref{sub:upper2}.
  \end{proof}
  For example, if $\{\alpha_{\ell}\}$ is a geometric sequence, i.e.,
  \begin{equation*}
    \alpha_{\ell} = \rho^{\ell}, \quad \ell\in\Naturals_0
  \end{equation*}
  for some $0<\rho<1$, and if the path gains are Gaussian and
  memoryless so
  \begin{equation*}
    h_{\ell} = \log(\pi e \alpha_{\ell}), \quad\ell\in\Naturals_0,
  \end{equation*}
  then Part~(i) of Proposition~\ref{prop:upperbounds} yields
  \begin{equation}
    C(\SNR) \leq \log\frac{2\pi}{\sqrt{\rho}}-1, \quad \SNR\geq 0.
  \end{equation}
  Part~(ii) of Proposition~\ref{prop:upperbounds} combines with
  \eqref{eq:lapidothmoser} to show that the
  pre-loglog of a multipath fading channel can never be larger than
  the pre-loglog of a frequency-flat fading channel. This result is
  consistent with the intuition that at high SNR the multipath
  behavior is detrimental.
  
  Our last result is a lower bound on the capacity. This bound
  is the basis for the proof of Part~(ii) of Theorem~\ref{thm:bounded} and
  for the direct part of Theorem~\ref{thm:finite}. 
  \begin{proposition}[Lower Bound]
    \label{prop:lowerbound}
    Consider the above channel model. Further assume that
    \begin{equation}
      \label{eq:alphasumlower}
      \sum_{\ell=0}^{\infty} \alpha_{\ell} \triangleq \alpha < \infty.
    \end{equation}
    Let $\const{L}(\const{P})\in\Naturals$ be some positive integer
    that satisfies
    \begin{equation}
      \label{eq:Lcondition}
      \sum_{\ell=\const{L}(\const{P})+1}^{\infty}\alpha_{\ell}\,
      \const{P} \leq \sigma^2
    \end{equation}
    (typically $\const{L}(\const{P})$ depends on $\const{P}$),
    and let $\tau\in\Naturals$ be some arbitrary positive integer that is
    allowed to depend on $\const{L}(\const{P})$.
    Then the capacity $C(\SNR)$ is lower bounded by
    \begin{IEEEeqnarray}{lCl}
      C(\SNR) \;& \geq & \frac{\tau}{\const{L}(\const{P})+\tau}
      \log\log\const{P}^{1/\tau}+\frac{\tau}{\const{L}(\const{P})+\tau}\biggl(\E{\log\bigl|H_1^{(0)}\bigr|^2}-1-2\log\Bigl(\sqrt{\alpha_0}+\sqrt{\alpha+2\sigma^2}\Bigr)\biggr),\quad\nonumber\\
      \IEEEeqnarraymulticol{3}{r}{\const{P}>1.\IEEEeqnarraynumspace}\label{eq:proplowerbound}
    \end{IEEEeqnarray}
  \end{proposition}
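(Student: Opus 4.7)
The plan is to lower bound $C(\SNR)$ by exhibiting an explicit input satisfying \eqref{eq:power} and invoking a Lapidoth-Moser-style mutual-information inequality in the spirit of \cite{lapidothmoser03_3}. The scheme is a block/burst transmission: the input is periodic with period $T=\const{L}(\const{P})+\tau$ channel uses, with $\tau$ IID ``data'' symbols $\tilde{X}_1,\ldots,\tilde{X}_\tau$ in the first $\tau$ positions of each block and zeros in the remaining $\const{L}(\const{P})$ positions. The silent tail serves as a guard interval: at any data slot, the only way the previous block's data can contribute interference is through paths of delay exceeding $\const{L}(\const{P})$, and by the hypothesis \eqref{eq:Lcondition} the total variance of this inter-block interference is at most $\sigma^2$. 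The data symbols are drawn IID from a Lapidoth-Moser-type ``log-uniform'' distribution: $\log|\tilde{X}_k|^2$ uniform on an interval of width $\log\const{P}^{1/\tau}$, positioned so that $\E{|\tilde{X}_k|^2}$ equals the per-active-slot budget $\bar{\const{P}}=\const{P}\,T/\tau$ imposed by averaging \eqref{eq:power} over the block.

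In each data slot one decomposes the output as $Y_k=H_k^{(0)}\tilde{X}_k+W_k$, where $W_k$ collects the intra-block ISI $\sum_{\ell=1}^{k-1}H_k^{(\ell)}\tilde{X}_{k-\ell}$, the inter-block ISI via paths of delay exceeding $\const{L}(\const{P})$, and the noise $Z_k$. Using uncorrelated scattering, stationarity, \eqref{eq:alphasumlower}, and \eqref{eq:Lcondition}, one checks that $\E{|W_k|^2}\leq\alpha\bar{\const{P}}+2\sigma^2$; the factor $\sqrt{\alpha_0}+\sqrt{\alpha+2\sigma^2}$ appearing in \eqref{eq:proplowerbound} is then a trace of the triangle inequality for the $L^2$-norm of $Y_k$ after factoring out $\sqrt{\bar{\const{P}}}$.

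Next, I apply the Lapidoth-Moser lower bound on $I(\tilde{X}_k;Y_k)$, treating $W_k$ as additive non-Gaussian noise with the above second moment. The log-uniform choice yields $h(\log|\tilde{X}_k|^2)=\log\log\const{P}^{1/\tau}$, while a maximum-entropy argument for circularly symmetric random variables controls $h(\log|Y_k|^2\mid\tilde{X}_k)$ in terms of $\E{|Y_k|^2}$. These combine to give a per-data-symbol mutual information of at least $\log\log\const{P}^{1/\tau}+\E{\log|H_1^{(0)}|^2}-1-2\log\bigl(\sqrt{\alpha_0}+\sqrt{\alpha+2\sigma^2}\bigr)$. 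Since the $\tau$ data symbols contribute additively to the block mutual information and the $\const{L}(\const{P})$ zero positions contribute nothing, dividing the total by $T=\const{L}(\const{P})+\tau$ gives \eqref{eq:proplowerbound}.

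The main obstacle is making this Lapidoth-Moser argument rigorous despite the correlated, non-Gaussian effective noise $W_k$, which depends on the remaining data symbols of the block through intra-block ISI. A virtue of the technique is that it only requires second-moment control of $W_k$ and finiteness of $\E{\log|H_1^{(0)}|^2}$ --- no finiteness of the full differential entropy rate $h_0$ is needed, which is why the bound does not involve $h_\ell$ --- but some care is required in handling the joint distribution of the $\tau$ data symbols and their induced ISI when bounding the conditional entropy of the output vector, and in verifying that the interval for $\log|\tilde{X}_k|^2$ can simultaneously respect the power constraint and yield the displayed $\log\log\const{P}^{1/\tau}$ term.
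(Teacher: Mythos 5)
There is a genuine gap, and it sits exactly where you flag that ``some care is required'': your scheme draws the $\tau$ data symbols IID from a \emph{single} log-uniform distribution of width $\log\const{P}^{1/\tau}$, positioned so that $\E{|\tilde{X}_k|^2}=\bar{\const{P}}$, whereas the paper's proof uses a \emph{staircase}: the $\nu$-th data symbol of each block has $\log|\tilde{X}_{b\tau+\nu}|^2$ uniform on $\bigl[\log\const{P}^{(\nu-1)/\tau},\log\const{P}^{\nu/\tau}\bigr]$, so the $\tau$ intervals tile $[0,\log\const{P}]$. This is not cosmetic. The lemma you invoke (Lemma~4 of \cite{lapidoth05_2}) carries the penalty term $\E{\log\bigl(\pi e(\sigma_H+\sigma_W/|X|)^2\bigr)}$, so what must be bounded by a constant is the \emph{ratio} $\E{|W_k|^2}/|\tilde{X}_k|^2$, not the absolute interference power $\E{|W_k|^2}\leq\alpha\bar{\const{P}}+2\sigma^2$ that you establish. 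The staircase makes the ratio bounded pointwise: the in-block interferer at delay $\ell\geq 1$ has $\E{|X_{k-\ell}|^2}\leq\const{P}^{(\nu-\ell)/\tau}$ while $|\tilde{X}_{b\tau+\nu}|^2\geq\const{P}^{(\nu-1)/\tau}$ with probability one, so each term contributes at most $\alpha_\ell$, and together with the guard condition \eqref{eq:Lcondition} for delays beyond $\const{L}(\const{P})$ and $|\tilde{X}|^2\geq 1$ (here $\const{P}>1$ enters) one gets the uniform bound $\alpha+2\sigma^2$, whence the constant $2\log\bigl(\sqrt{\alpha_0}+\sqrt{\alpha+2\sigma^2}\bigr)$. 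In your scheme this fails: every interferer has mean power $\bar{\const{P}}$ while $|\tilde{X}_k|^2$ ranges down to roughly $\bar{\const{P}}\const{P}^{-1/\tau}$, so the ratio reaches order $\const{P}^{1/\tau}$. Quantitatively, writing $w=\log\const{P}^{1/\tau}$, your distribution has $\E{\log|\tilde{X}_k|^2}\approx\log\bar{\const{P}}-w/2$ (up to $\log w$ terms), so the penalty term contributes about $+w/2$ while the gain term $h\bigl(\log|\tilde{X}_k|^2\bigr)$ contributes only $\log w$; the resulting per-symbol bound tends to $-\infty$ as $\const{P}\to\infty$ for fixed $\tau$, instead of giving $\log\log\const{P}^{1/\tau}+O(1)$. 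Retaining the conditioning on the earlier symbols in the chain rule does not repair this, since it is the signal-to-interference ratio, not knowledge of the interfering symbols' values, that breaks down.

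A secondary, fixable discrepancy: boosting the active symbols to the full budget $\bar{\const{P}}=\const{P}(\const{L}(\const{P})+\tau)/\tau$ spoils the stated constants, because \eqref{eq:Lcondition} only guarantees $\sum_{\ell>\const{L}(\const{P})}\alpha_\ell\,\const{P}\leq\sigma^2$, so your inter-block interference is bounded by $\sigma^2(\const{L}(\const{P})+\tau)/\tau$ rather than $\sigma^2$. The paper deliberately does \emph{not} exhaust the power budget: the $\nu$-th symbol has $\E{|\tilde{X}_{b\tau+\nu}|^2}\leq\const{P}^{\nu/\tau}\leq\const{P}$, which trivially satisfies \eqref{eq:power} and lets exactly the displayed constant emerge. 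The rest of your outline --- guard intervals of length $\const{L}(\const{P})$, the decomposition $Y_k=H_k^{(0)}\tilde{X}_k+W_k$, verifying the lemma's independence and Markov hypotheses via uncorrelated scattering, the identity $h(\tilde{X})-\E{\log|\tilde{X}|^2}=\log\log\const{P}^{1/\tau}+\log\pi$, the normalization $\tau/(\const{L}(\const{P})+\tau)$, and the observation that finiteness of $h_0$ is not needed --- matches the paper's proof.
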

  \begin{proof}
    See Section~\ref{sub:lower}.
  \end{proof}
  
\section{Proofs of the Upper Bounds}
\label{sec:upperproof}

In this section, we establish a proof of
Proposition~\ref{prop:upperbounds}, which in turn will be used to
prove Part~(i) of Theorem~\ref{thm:bounded} and the converse to
Theorem~\ref{thm:finite}.

Part~(i) of Proposition~\ref{prop:upperbounds} is proven in
Section~\ref{sub:upper1}, and it is demonstrated that Part~(i) of
Theorem~\ref{thm:bounded} follows immediately from this
result. Section~\ref{sub:upper2} proves Part~(ii) of
Proposition~\ref{prop:upperbounds}. This part provides an upper bound
on the capacity pre-loglog and will be used later, together with a
capacity lower bound that is derived in Section~\ref{sec:lowerproof}, to
establish Theorem~\ref{thm:finite}.

\subsection{Bounded Capacity}
\label{sub:upper1}
We provide a proof of Part~(i) of
Proposition~\ref{prop:upperbounds} by deriving an upper bound on
channel capacity that holds under the assumption that for some
$0<\rho<1$ and some $\ell_0\in\Naturals_0$
\begin{equation}
  \label{eq:boundedassumption}
  \alpha_{\ell_0} > 0 \qquad \textnormal{and} \qquad
  \frac{\alpha_{\ell+1}}{\alpha_{\ell}} \geq \rho, \quad \ell\geq\ell_0.
\end{equation}
As this bound is finite for $\SNR\geq 0$, Part~(i) of
Theorem~\ref{thm:bounded} follows immediately from
Part~(i) of Proposition~\ref{prop:upperbounds} by noting that if
\begin{equation*}
  \varliminf_{\ell\to\infty} \frac{\alpha_{\ell+1}}{\alpha_{\ell}}>0,
\end{equation*}
then we can find a $0<\rho<1$ and an $\ell_0\in\Naturals$ satisfying
\eqref{eq:boundedassumption}.

The proof of the desired upper bound is akin to the proof of an upper
bound that was derived in
\cite[Sec.~6.1]{kochlapidothsotiriadis08_1_submitted_to}. (However,
\cite{kochlapidothsotiriadis08_1_submitted_to} studies a channel whose
inputs \& outputs take value in the set of real numbers rather than in
$\Complex$.)
It is based on \eqref{eq:capacity} and on
an upper bound on $\frac{1}{n} I(X_1^n;Y_1^n)$. To this end, we begin
with the chain rule for mutual information \cite[Thm.~2.5.2]{coverthomas91}
\begin{IEEEeqnarray}{lCl}
  \frac{1}{n} I(X_1^n;Y_1^n)
  & = & \frac{1}{n} \sum_{k=1}^{\ell_0}
  I\big(X_1^n;Y_k\big|Y_1^{k-1}\big) + \frac{1}{n}\sum_{k=\ell_0+1}^n
  I\big(X_1^n;Y_k\big|Y_1^{k-1}\big).\IEEEeqnarraynumspace \label{eq:chainrule}
\end{IEEEeqnarray}
Each term in the first sum on the right-hand side (RHS) of
\eqref{eq:chainrule} is upper bounded by
\begin{IEEEeqnarray}{lCl}
  I\big(X_1^n;Y_k\big|Y_1^{k-1}\big) & \leq & h(Y_k) -
  h\Big(Y_k\Big|Y_1^{k-1},X_1^n,H_k^{(0)},H_k^{(1)},\ldots,H_k^{(k-1)}\Big)\nonumber\\
  & \leq & \log\left(\pi
  e\left(\sigma^2+\sum_{\ell=0}^{k-1} \alpha_{\ell}\E{|X_{k-\ell}|^2}\right)\right)
  - \log\big(\pi e\sigma^2\big) \nonumber\\
  & \leq & \log\left(1+\sup_{\ell\in\Naturals_0} \alpha_{\ell} \, n
  \, \SNR\right), \label{eq:firstl}
\end{IEEEeqnarray}
where the first inequality follows because conditioning cannot increase
differential entropy \cite[Thm.~9.6.1]{coverthomas91}; the second inequality follows from the entropy maximizing
property of Gaussian random variables \cite[Thm.~9.6.5]{coverthomas91}; and the last inequality
follows by upper bounding $\alpha_{\ell}\leq
\sup_{\ell'\in\Naturals_0}\alpha_{\ell'}$, $\ell=0,1,\ldots,k-1$
and from the power constraint \eqref{eq:power}.

For $k=\ell_0+1,\ell_0+2,\ldots,n$, we upper bound
$I\big(X_1^n;Y_k\big|Y_1^{k-1}\big)$ using the general upper bound for
mutual information \cite[Thm.~5.1]{lapidothmoser03_3}
\begin{equation}
  I(X;Y) \leq \int D\big(W(\cdot|x)\big\| R(\cdot)\big) \d Q(x), \label{eq:duality}
\end{equation}
where $D(\cdot\|\cdot)$ denotes relative entropy, i.e.,
\begin{equation*}
  D(P_1\|P_0) = \left\{\begin{array}{ll}\displaystyle \int
  \log\frac{\d P_1}{\d P_0}\d P_1 \quad & \textnormal{if }P_1 \ll P_0 \\
  +\infty & \textnormal{otherwise,}\end{array}\right.
\end{equation*}
$W(\cdot|\cdot)$ is
the channel law, $Q(\cdot)$ denotes
the distribution on the channel input $X$, and $R(\cdot)$ is any
distribution on the output alphabet.\footnote{For channels with
  finite input and output alphabets this inequality follows by
  Tops{\o}e's identity \cite{topsoe67}; see also
  \cite[Thm.~3.4]{csiszarkorner81}.} Thus any choice of output
distribution $R(\cdot)$ yields an upper bound on the mutual
information.

For any given $Y_1^{k-1}=y_1^{k-1}$, we choose the output distribution
  $R(\cdot)$ to be of density
\begin{equation}
  \frac{\sqrt{\beta}}{\pi^2 |y_k|}\frac{1}{1+\beta
  |y_k|^2}, \qquad y_k \in \Complex, \label{eq:cauchy}
\end{equation}
with $\beta=1/(\tilde{\rho}|y_{k-\ell_0}|^2)$ and
\begin{equation}
  \tilde{\rho} = \min\left\{\rho^{\ell_0-1}
  \frac{\alpha_{\ell_0}}{\max_{0\leq\ell'<\ell_0}
  \alpha_{\ell'}},\rho^{\ell_0}\right\}.
\end{equation}
(If $y_{k-\ell_0}=0$, then the density \eqref{eq:cauchy} is
  undefined. However, this event is of zero probability and has
  therefore no impact on the mutual information $I\big(X_1^n;Y_k\big|Y_1^{k-1}\big)$.)
With this choice
\begin{equation}
  0 < \tilde{\rho} < 1 \qquad \textnormal{and} \qquad \tilde{\rho} \, \alpha_{\ell}
  \leq \alpha_{\ell+\ell_0}, \quad \ell \in \Naturals_0. \label{eq:beta}
\end{equation}
Using \eqref{eq:cauchy} in \eqref{eq:duality}, and averaging over
$Y_1^{k-1}$, we obtain
\begin{IEEEeqnarray}{lCl}
  I\big(X_1^n;Y_k\big|Y_1^{k-1}\big) & \leq & \frac{1}{2} \E{\log
    |Y_k|^2} + \frac{1}{2}\E{\log\big(\tilde{\rho}
    |Y_{k-\ell_0}|^2\big)} +
  \E{\log\biggl(1+\frac{|Y_k|^2}{\tilde{\rho}|Y_{k-\ell_0}|^2}\biggr)}\nonumber\\
  & & {} - h\big(Y_k\big|X_1^n,Y_1^{k-1}\big) + \log \pi^2\nonumber\\
  & = & \frac{1}{2} \E{\log|Y_k|^2} -
  \frac{1}{2}\E{\log|Y_{k-\ell_0}|^2}+\E{\log\bigl(\tilde{\rho}|Y_{k-\ell_0}|^2+|Y_k|^2\bigr)}\nonumber\\
  & & {}  - h\big(Y_k\big|X_1^n,Y_1^{k-1}\big) + \log \frac{\pi^2}{\sqrt{\tilde{\rho}}}.\label{eq:up1}
\end{IEEEeqnarray}

We bound the third and the fourth term in \eqref{eq:up1}
separately. We begin with
\begin{IEEEeqnarray}{lCl}
  \E{\log\big(\tilde{\rho}|Y_{k-\ell_0}|^2+|Y_k|^2\big)} & = &
  \E{\Econd{\log\big(\tilde{\rho}|Y_{k-\ell_0}|^2+|Y_k|^2\big)}{X_1^k}}\nonumber\\
  & \leq &
  \E{\log\Bigl(\tilde{\rho}\Econd{|Y_{k-\ell_0}|^2}{X_1^k}+\Econd{|Y_k|^2}{X_1^k}\Bigr)}\nonumber\\
  & = &
  \E{\log\biggl((1+\tilde{\rho})\sigma^2+\sum_{\ell=0}^{k-\ell_0-1}\tilde{\rho}\,\alpha_{\ell}|X_{k-\ell_0-\ell}|^2
  + \sum_{\ell=0}^{k-1}\alpha_{\ell} |X_{k-\ell}|^2\biggr)}\nonumber\\
  & \leq & \E{\log\left(2\sigma^2+\sum_{\ell=0}^{k-\ell_0-1}
      \alpha_{\ell+\ell_0}|X_{k-\ell_0-\ell}|^2+\sum_{\ell=0}^{k-1}
      \alpha_{\ell}|X_{k-\ell}|^2\right)}\nonumber\\
  & = & \E{\log\left(2\sigma^2+\sum_{\ell'=\ell_0}^{k-1}
      \alpha_{\ell'}|X_{k-\ell'}|^2+\sum_{\ell=0}^{k-1}
      \alpha_{\ell}|X_{k-\ell}|^2\right)}\nonumber\\
  & \leq & \log 2 + \E{\log\left(\sigma^2+\sum_{\ell=0}^{k-1} \alpha_{\ell}|X_{k-\ell}|^2\right)},\label{eq:U3}
\end{IEEEeqnarray}
where the first inequality follows by Jensen's inequality; the
subsequent equality follows by evaluating the expectations; the next
inequality by \eqref{eq:beta}; the following equality by substituting
$\ell'=\ell+\ell_0$; and the last inequality follows because
\begin{equation*}
  \sum_{\ell=\ell_0}^{k-1}\alpha_{\ell}|X_{k-\ell}|^2 \leq
  \sum_{\ell=0}^{k-1} \alpha_{\ell} |X_{k-\ell}|^2.
\end{equation*}
Next we derive a lower bound on
$h\big(Y_k\big|X_1^n,Y_1^{k-1}\big)$. Let 
\begin{equation}
  \Bigl\{H_{k'}^{(\ell)}\Bigr\}_{k'=1}^{k-1} =
  \Bigl(H_1^{(\ell)},H_2^{(\ell)},\ldots,H_{k-1}^{(\ell)}\Bigr), \quad \ell\in\Naturals_0,
\end{equation}
and let
\begin{equation}
  \vect{H}_1^{k-1} =
  \biggl(\Bigl\{H_{k'}^{(0)}\Bigr\}_{k'=1}^{k-1},\Bigl\{H_{k'}^{(1)}\Bigr\}_{k'=1}^{k-1},\ldots,\Bigl\{H_{k'}^{(k-1)}\Bigr\}_{k'=1}^{k-1}\biggr).
\end{equation}
We have
\begin{IEEEeqnarray}{lCl}
  h\big(Y_k\big|X_1^n,Y_1^{k-1}\big) & \geq &
  h\big(Y_k\big|X_1^n,Y_1^{k-1},\vect{H}_{1}^{k-1}\big)\nonumber\\
  & = & h\big(Y_k\Big|X_1^n,\vect{H}_{1}^{k-1}\big),
\end{IEEEeqnarray}
where the inequality follows because conditioning cannot increase
differential entropy; and where the equality follows because, conditional on
$\big(X_1^n,\vect{H}_{1}^{k-1}\big)$, $Y_k$ is independent of
$Y_1^{k-1}$.
Let $\set{S}_k$ be defined as
\begin{equation}
  \label{eq:S}
  \set{S}_k\triangleq\{\ell = 0,1,\ldots,k-1:
  |x_{k-\ell}|^2\,\alpha_{\ell}>0\}.
\end{equation}
Using the entropy power inequality \cite[Thm.~16.6.3]{coverthomas91},
and using that the processes
\begin{equation*}
\big\{H_{k}^{(0)}, k\in\Naturals\big\},\big\{H_{k}^{(1)}, k\in\Naturals\big\},\ldots
\end{equation*}
are independent and jointly independent of $X_1^n$, it is shown in Appendix~\ref{app:EPI} that for any given $X_1^n=x_1^n$
\begin{IEEEeqnarray}{lCl}
  \IEEEeqnarraymulticol{3}{l}{h\Biggl(\left.\sum_{\ell=0}^{k-1}
      H_k^{(\ell)}X_{k-\ell}+Z_k\right|X_1^n=x_1^n,\vect{H}_1^{k-1}\Biggr)}\nonumber\\
    \qquad \qquad \qquad \qquad \quad & \geq & \log\Biggl(\sum_{\ell\in\set{S}_k}
      e^{h\Bigl(H_k^{(\ell)}X_{k-\ell}\Bigm|X_{k-\ell}=x_{k-\ell},\bigl\{H_{k'}^{(\ell)}\bigr\}_{k'=1}^{k-1}\Bigr)}+e^{h(Z_k)}
    \Biggr).\IEEEeqnarraynumspace\label{eq:entropy1}
\end{IEEEeqnarray}
We lower bound the differential entropies on the RHS of
\eqref{eq:entropy1} as follows. The differential entropies in the sum are lower
bounded by
\begin{IEEEeqnarray}{lCl}
  \IEEEeqnarraymulticol{3}{l}{h\biggl(H_k^{(\ell)}X_{k-\ell}\biggm|X_{k-\ell}=x_{k-\ell},\Bigl\{H_{k'}^{(\ell)}\Bigr\}_{k'=1}^{k-1}\biggr)}\nonumber\\
  \qquad\qquad\qquad & = &
  \log\big(\alpha_{\ell}|x_{k-\ell}|^2\big)+h\biggl(H_k^{(\ell)}\biggm|\Big\{H_{k'}^{(\ell)}\Big\}_{k'=1}^{k-1}\biggr)-\log\alpha_{\ell}\IEEEeqnarraynumspace\nonumber\\
  & \geq &
  \log\big(\alpha_{\ell}|x_{k-\ell}|^2\big) +
  \inf_{\ell\in\set{L}}\left(h_{\ell}-\log\alpha_{\ell}\right),\quad\qquad \ell\in\set{S}_k,\label{eq:entropy2}
\end{IEEEeqnarray}
where the equality follows from the behavior of differential entropy under
scaling \cite[Thm.~9.6.4]{coverthomas91}; and where the inequality
follows by the stationarity of the process
$\big\{H_k^{(\ell)},k\in\Naturals\big\}$, which implies that the
differential entropy
\begin{equation*}
h\biggl(H_k^{(\ell)}\biggm|\Big\{H_{k'}^{(\ell)}\Big\}_{k'=1}^{k-1}\biggr),
\quad \ell\in\set{S}_k
\end{equation*}
cannot be smaller than the differential entropy rate $h_{\ell}$
\cite[Thms.~4.2.1 \& 4.2.2]{coverthomas91}, and by lower bounding
$(h_{\ell}-\log\alpha_{\ell})$ by
$\inf_{\ell\in\set{L}}(h_{\ell}-\log\alpha_{\ell})$ (which holds for each
$\ell\in\set{S}_k$ because $\set{S}_k\subseteq\set{L}$).
The last differential entropy on the RHS of \eqref{eq:entropy1} is
lower bounded by
\begin{equation}
  \label{eq:obvious}
  h(Z_k) =\log(\pi e \sigma^2)\geq \inf_{\ell\in\set{L}}\left(h_{\ell}-\log\alpha_{\ell}\right)
  + \log\sigma^2,
\end{equation}
which follows because conditioning cannot increase differential
entropy, and because Gaussian random variables maximize differential
entropy:
\begin{IEEEeqnarray}{lCl}
  \inf_{\ell\in\set{L}}\left(h_{\ell}-\log\alpha_{\ell}\right)
  & \leq & \inf_{\ell\in\set{L}}
  \left(h\Bigl(H_k^{(\ell)}\Bigr)-\log\alpha_{\ell}\right)\nonumber\\
  & \leq & \inf_{\ell\in\set{L}} \bigl(\log(\pi e \alpha_{\ell})-\log\alpha_{\ell}\bigr)\nonumber\\
  & = & \log(\pi e).
\end{IEEEeqnarray}
Applying \eqref{eq:entropy2} \& \eqref{eq:obvious} to
\eqref{eq:entropy1}, and averaging over $X_1^n$, yields then
\begin{IEEEeqnarray}{lCl}
  h\big(Y_k\big|X_1^n,Y_1^{k-1}\big) & \geq & \E{\log\Biggl(\sum_{\ell\in\set{S}_k}
      \alpha_{\ell} |X_{k-\ell}|^2 e^{\inf_{\ell\in\set{L}}\left(h_{\ell}-\log\alpha_{\ell}\right)}+\sigma^2
      e^{\inf_{\ell\in\set{L}}\left(h_{\ell}-\log\alpha_{\ell}\right)}\Biggr)} \nonumber\\
   & = & \E{\log\left(\sigma^2+\sum_{\ell=0}^{k-1} \alpha_{\ell}
      |X_{k-\ell}|^2\right)} + \inf_{\ell\in\set{L}}\left(h_{\ell}-\log\alpha_{\ell}\right).
  \label{eq:U4}
\end{IEEEeqnarray}

Returning to the analysis of \eqref{eq:up1}, we obtain from
\eqref{eq:U3} and \eqref{eq:U4}
\begin{IEEEeqnarray}{lCl}
  I\big(X_1^n;Y_k\big|Y_1^{k-1}\big) & \leq & \frac{1}{2} \E{\log|Y_k|^2} -
  \frac{1}{2}\E{\log|Y_{k-\ell_0}|^2} + \log 2 +
  \E{\log\left(\sigma^2+\sum_{\ell=0}^{k-1}
      \alpha_{\ell}|X_{k-\ell}|^2\right)}\nonumber\\
  & & {} - \E{\log\left(\sigma^2+\sum_{\ell=0}^{k-1} \alpha_{\ell}
      |X_{k-\ell}|^2\right)} -
  \inf_{\ell\in\set{L}}\left(h_{\ell}-\log\alpha_{\ell}\right) + \log
  \frac{\pi^2}{\sqrt{\tilde{\rho}}}\nonumber\\
  & = & \frac{1}{2} \E{\log|Y_k|^2} -
  \frac{1}{2}\E{\log|Y_{k-\ell_0}|^2} + \const{K}, \label{eq:almostfinished}
\end{IEEEeqnarray}
where $\const{K}$ is defined as
\begin{IEEEeqnarray}{lCl}
  \const{K} & \triangleq & \log\frac{2\pi^2}{\sqrt{\tilde{\rho}}} - \inf_{\ell\in\set{L}}\left(h_{\ell}-\log\alpha_{\ell}\right).
\end{IEEEeqnarray}

Applying \eqref{eq:almostfinished} and \eqref{eq:firstl} to
\eqref{eq:chainrule}, we have
\begin{IEEEeqnarray}{lCl}
  \IEEEeqnarraymulticol{3}{l}{\frac{1}{n} I(X_1^n;Y_1^n)}\nonumber\\
  \quad & \leq & \frac{1}{n} \sum_{k=1}^{\ell_0}
  \log\left(1+\sup_{\ell\in\Naturals_0} \alpha_{\ell} \,
    n\,\SNR\right) + \frac{1}{n} \sum_{k=\ell_0+1}^n \biggl( \frac{1}{2} \E{\log|Y_k|^2} -
  \frac{1}{2}\E{\log|Y_{k-\ell_0}|^2} + \const{K}\biggr)\nonumber\\
  & = & \frac{\ell_0}{n} \log\left(1+\sup_{\ell\in\Naturals} \alpha_{\ell} \,
    n\,\SNR\right) + \frac{n-\ell_0}{n} \const{K} + \frac{1}{2n} \sum_{k=\ell_0+1}^n \biggl(\E{\log|Y_k|^2} -
  \E{\log|Y_{k-\ell_0}|^2}\biggr).\IEEEeqnarraynumspace \label{eq:almostfinished2}
\end{IEEEeqnarray}

To show that the RHS of \eqref{eq:almostfinished2} is bounded
in the SNR, we use that, for any sequences $\{a_k\}$ and $\{b_k\}$,
\begin{IEEEeqnarray}{lCl}
  \sum_{k=\ell_0+1}^n (a_k-b_k) & = & \sum_{k=n-\ell_0+1}^n
 (a_k-b_{k-n+2\ell_0}) +
 \sum_{k=\ell_0+1}^{n-\ell_0}(a_k-b_{k+\ell_0}).\IEEEeqnarraynumspace
 \label{eq:sum2}
\end{IEEEeqnarray}
Defining
\begin{equation}
  a_k \triangleq \E{\log|Y_k|^2}
\end{equation}
and
\begin{equation}
  b_k \triangleq \E{\log|Y_{k-\ell_0}|^2}
\end{equation}
we have for the first sum on the RHS of \eqref{eq:sum2}
\begin{IEEEeqnarray}{lCl}
  \sum_{k=n-\ell_0+1}^n
  (a_k-b_{k-n+2\ell_0}) & = & \sum_{k=n-\ell_0+1}^n \biggl(\E{\log|Y_k|^2} -
  \E{\log|Y_{k-n+\ell_0}|^2}\biggr) \nonumber\\
  & \leq & \sum_{k=n-\ell_0+1}^n \biggl(\log\E{|Y_k|^2} -
  \E{\log|Y_{k-n+\ell_0}|^2}\biggr)\nonumber\\
  & \leq & \sum_{k=n-\ell_0+1}^n \Biggl(
  \log\biggl(\sigma^2+\sup_{\ell\in\Naturals_0} \alpha_{\ell} \, n\,\const{P}\biggr) -
  \E{\log|Y_{k-n+\ell_0}|^2}\Biggr)\nonumber\\
  & \leq & \sum_{k=n-\ell_0+1}^n \Biggl(
  \log\biggl(\sigma^2+\sup_{\ell\in\Naturals_0} \alpha_{\ell} \, n\,\const{P}\biggr)-\E{\log|Z_{k-n+\ell_0}|^2}\Biggr)\nonumber\\
  & = & \ell_0 \log\biggl(1+\sup_{\ell\in\Naturals_0}\alpha_{\ell} \, n \, \SNR\biggr)+\ell_0\gamma,\label{eq:last1}
\end{IEEEeqnarray}
where $\gamma\approx 0.577$ denotes Euler's constant. Here the first
inequality follows by Jensen's inequality; the following inequality
follows by upper bounding
\begin{equation*}
  \E{|Y_k|^2} = \sigma^2+\sum_{\ell=0}^{k-1} \alpha_{\ell}
  \E{|X_{k-\ell}|^2} \leq \sigma^2 + \sup_{\ell\in\Naturals_0}
  \alpha_{\ell} \, n \,\const{P};
\end{equation*}
the subsequent inequality follows by noting that, conditional on
$\sum_{\ell=0}^{k-n+\ell_0-1} H_{k-n+\ell_0}^{(\ell)} X_{k-n+\ell_0-\ell}$,
we have that $|Y_{k-n+\ell_0}|^2$ is stochastically larger than
$|Z_{k-n+\ell_0}|^2$, so
\begin{IEEEeqnarray*}{l}
  \Econdd{\log|Y_{k-n+\ell_0}|^2}{\sum_{\ell=0}^{k-n+\ell_0-1}H_{k-n+\ell_0}^{(\ell)} X_{k-n+\ell_0-\ell}}\\
  \qquad\qquad\qquad\qquad\geq \Econdd{\log|Z_{k-n+\ell_0}|^2}{\sum_{\ell=0}^{k-n+\ell_0-1}H_{k-n+\ell_0}^{(\ell)} X_{k-n+\ell_0-\ell}}
\end{IEEEeqnarray*}
from which we obtain the lower bound
$\E{\log|Y_{k-n+\ell_0}|^2}\geq\E{\log|Z_{k-n+\ell_0}|^2}$ upon
averaging over $\sum_{\ell=0}^{k-n+\ell_0-1} H_{k-n+\ell_0}^{(\ell)} X_{k-n+\ell_0-\ell}$
 (see \cite[Sec.~VI--B]{lapidothmoser03_3} and in particular
 \cite[Lemma 6.2 b)]{lapidothmoser03_3}); and the last equality
 follows by evaluating the expected logarithm of an exponentially
 distributed random variable of mean $\sigma^2$, i.e.,
 $\E{\log|Z_{k-n+\ell_0}|^2}=\log\sigma^2-\gamma$. 

For the second sum on the RHS of \eqref{eq:sum2} we have
\begin{IEEEeqnarray}{lCl}
  \sum_{k=\ell_0+1}^{n-\ell_0}(a_k-b_{k+\ell_0}) & = &
  \sum_{k=\ell_0+1}^{n-\ell_0} \biggl(\E{\log|Y_k|^2} -
  \E{\log|Y_{k}|^2}\biggr)=0.\label{eq:last2}
\end{IEEEeqnarray}
Thus applying \eqref{eq:sum2}--\eqref{eq:last2} to
\eqref{eq:almostfinished2} yields
\begin{IEEEeqnarray}{lCl}
  \frac{1}{n}I(X_1^n;Y_1^n) & \leq & \frac{2 \ell_0}{n} \log\left(1+\sup_{\ell\in\Naturals_0} \alpha_{\ell} \,
    n\,\SNR\right) + \frac{n-\ell_0}{n} \const{K} + \frac{\ell_0}{n} \gamma,
\end{IEEEeqnarray}
which tends to
\begin{equation*}
  \const{K} = \log\frac{2\pi^2}{\sqrt{\tilde{\rho}}} -
  \inf_{\ell\in\set{L}}\left(h_{\ell}-\log\alpha_{\ell}\right)
\end{equation*}
as $n$ tends to infinity. This proves Part~(i) of Proposition~\ref{prop:upperbounds}.

\subsection{Unbounded Capacity}
\label{sub:upper2}
We prove Part~(ii) of
Proposition~\ref{prop:upperbounds} by deriving an upper bound on
capacity that holds under the assumption \eqref{eq:alphasumlower}, namely,
\begin{equation*}
  \sum_{\ell=0}^{\infty} \alpha_{\ell} < \infty.
\end{equation*}
From this upper bound follows that
\begin{equation}
  \varlimsup_{\SNR\to\infty} \bigl\{C(\SNR) -
  \log\log\SNR\bigr\} < \infty,
\end{equation}
which in turn shows that the capacity pre-loglog is upper bounded by
\begin{equation}
  \Lambda \triangleq \varlimsup_{\SNR\to\infty}
  \frac{C(\SNR)}{\log\log\SNR} \leq 1.
\end{equation}
This yields the converse to Theorem~\ref{thm:finite}.

As in Section~\ref{sub:upper1}, the desired upper bound follows by
\eqref{eq:capacity} and by deriving an upper bound on $\frac{1}{n} I(X_1^n;Y_1^n)$. To this end,
we begin with the chain rule for mutual information
\begin{equation}
  I\big(X_1^n;Y_1^n\big) = \sum_{k=1}^n I\big(X_1^n;Y_k\big|Y_1^{k-1}\big)\label{eq:upperchain}
\end{equation}
and upper bound each summand on the RHS of
\eqref{eq:upperchain} using \cite[Eq.~(27)]{lapidothmoser03_3}
\begin{IEEEeqnarray}{lCl}
  I\big(X_1^n;Y_k\big|Y_1^{k-1}\big)& \leq & \E{\log
    |Y_k|^2}-h\big(Y_k\big|X_1^n,Y_1^{k-1}\big) + \xi
  \bigl(1+\log\E{|Y_k|^2}-\E{\log|Y_k|^2}\bigr)\nonumber\\
  & & {} +\log\Gamma(\xi)-\xi\log\xi+\log\pi \nonumber\\
  & = & (1-\xi)\E{\log
    |Y_k|^2}  -h\big(Y_k\big|X_1^n,Y_1^{k-1}\big) + \xi
  \bigl(1+\log\E{|Y_k|^2}\bigr)\nonumber\\
  & & {} +\log\Gamma(\xi)-\xi\log\xi+\log\pi, \label{eq:duality2}
\end{IEEEeqnarray}
for any $\xi>0$. Here $\Gamma(\cdot)$ denotes the Gamma
function.

We evaluate the terms on the RHS of \eqref{eq:duality2} individually.
We upper bound the first term using Jensen's inequality
\begin{IEEEeqnarray}{lCl}
  \E{\log|Y_k|^2} & = & \E{\Econd{\log|Y_k|^2}{X_1^k}}\nonumber\\
  & \leq & \E{\log \Econd{|Y_k|^2}{X_1^k}} \nonumber\\
  & = &\E{\log\left(\sigma^2+\sum_{\ell=0}^{k-1}\alpha_{\ell}|X_{k-\ell}|^2\right)}.\label{eq:upper1}
\end{IEEEeqnarray}
The second term was already evaluated in \eqref{eq:U4}
\begin{IEEEeqnarray}{lCl}
  h\big(Y_k\big|X_1^n,Y_1^{k-1}\big) & \geq &
  \E{\log\left(\sigma^2+\sum_{\ell=0}^{k-1}\alpha_{\ell}|X_{k-\ell}|^2\right)}
  + \inf_{\ell\in\set{L}}\left(h_{\ell}-\alpha_{\ell}\right),\label{eq:upper2}
\end{IEEEeqnarray}
and the next term is readily evaluated as
\begin{equation}
  \log \E{|Y_k|^2} = \log\left(\sigma^2+\sum_{\ell=0}^{k-1}\alpha_{\ell}\E{|X_{k-\ell}|^2}\right).\label{eq:upper3}
\end{equation}

Our choice of $\xi$ will satisfy $\xi<1$ (see \eqref{eq:xi} ahead). We
therefore obtain, upon substituting \eqref{eq:upper1}--\eqref{eq:upper3} in
\eqref{eq:duality2},
\begin{IEEEeqnarray}{lCl}
  I\big(X_1^n;Y_k\big|Y_1^{k-1}\big) & \leq & (1-\xi)
  \E{\log\left(\sigma^2+\sum_{\ell=0}^{k-1}\alpha_{\ell}|X_{k-\ell}|^2\right)}
  -  \E{\log\left(\sigma^2+\sum_{\ell=0}^{k-1}\alpha_{\ell}|X_{k-\ell}|^2\right)}\nonumber\\
  & & {} -\inf_{\ell\in\set{L}}\left(h_{\ell}-\alpha_{\ell}\right) + \xi\Biggl(
  1+\log\Biggl(\sigma^2+\sum_{\ell=0}^{k-1}\alpha_{\ell}\E{|X_{k-\ell}|^2}\Biggr)\Biggr) \nonumber\\
  & & {} + \log\Gamma(\xi)-\xi\log\xi+\log\pi\nonumber\\
  & = & -\inf_{\ell\in\set{L}}\left(h_{\ell}-\alpha_{\ell}\right)
  \nonumber\\
  & & {} + \xi\left(
    1+\log\Biggl(\sigma^2+\sum_{\ell=0}^{k-1}\alpha_{\ell}\E{|X_{k-\ell}|^2}\Biggr)-\E{\log\left(\sigma^2+\sum_{\ell=0}^{k-1}\alpha_{\ell}|X_{k-\ell}|^2\right)}\right)\nonumber\\
  & & {} + \log\Gamma(\xi)-\xi\log\xi+\log\pi\nonumber\\
  & \leq & -\inf_{\ell\in\set{L}}\left(h_{\ell}-\alpha_{\ell}\right) +
  \log\Gamma(\xi)-\xi\log\xi+\log\pi\nonumber\\
  & & {} + \xi\left(
    1+\log\Biggl(1+\sum_{\ell=0}^{k-1}\alpha_{\ell}\E{|X_{k-\ell}|^2}/\sigma^2\Biggr)\right),\label{eq:upper6}
\end{IEEEeqnarray}
where the last inequality follows by lower bounding
$\E{\log\left(\sigma^2+\sum_{\ell=0}^{k-1}\alpha_{\ell}|X_{k-\ell}|^2\right)}\geq\log\sigma^2$.

We choose
\begin{equation}
  \label{eq:xi}
  \xi = \frac{1}{1+\log\bigl(1+\alpha \,\SNR\bigr)}
\end{equation}
(where $\alpha$ was defined in \eqref{eq:alphasumlower}).
Defining
\begin{equation}
  \Psi(\SNR) \triangleq \left[\log\Gamma(\xi)-\log\frac{1}{\xi}-\xi\log\xi\right|_{\xi=\bigl(1+\log(1+\alpha \,\SNR)\bigr)^{-1}},
\end{equation}
we obtain
\begin{IEEEeqnarray}{lCl}
  I\big(X_1^n;Y_k\big|Y_1^{k-1}\big) & \leq & -\inf_{\ell\in\set{L}}
  \left(h_{\ell}-\alpha_{\ell}\right) +
  \log\bigl(1+\log(1+\alpha\,\SNR)\bigr)+\Psi(\SNR) + \log\pi \nonumber\\
  & & {} +
  \frac{1+\log\left(1+\sum_{\ell=0}^{k-1}\alpha_{\ell}\E{|X_{k-\ell}|^2}/\sigma^2\right)}{1+\log\left(1+\alpha\,
  \SNR\right)}.\label{eq:uppersummand}
\end{IEEEeqnarray}
Using \eqref{eq:uppersummand} in \eqref{eq:upperchain} yields then
\begin{IEEEeqnarray}{lCl}
  \frac{1}{n} I\big(X_1^n;Y_1^n\big) & \leq & -\inf_{\ell\in\set{L}}
  \left(h_{\ell}-\alpha_{\ell}\right) +
  \log\bigl(1+\log(1+\alpha\,\SNR)\bigr)+\Psi(\SNR) + \log\pi
  \nonumber\\
  & & {} +  \frac{1+\frac{1}{n}\sum_{k=1}^n \log\left(1+\sum_{\ell=0}^{k-1}\alpha_{\ell}\E{|X_{k-\ell}|^2}/\sigma^2\right)}{1+\log\left(1+\alpha\,
  \SNR\right)}.\label{eq:UPPER}
\end{IEEEeqnarray}
By Jensen's inequality we have
\begin{IEEEeqnarray}{lCl}
  \frac{1}{n}\sum_{k=1}^n
    \log\left(1+\sum_{\ell=0}^{k-1}\alpha_{\ell}\E{|X_{k-\ell}|^2}/\sigma^2\right)
    & \leq &
    \log\left(1+\frac{1}{n}\sum_{k=1}^n\sum_{\ell=0}^{k-1}\alpha_{\ell}\E{|X_{k-\ell}|^2}/\sigma^2\right)\nonumber\\
  & \leq & \log\left(1+\alpha\,\SNR\right),\label{eq:upperjensen}
\end{IEEEeqnarray}
where the last inequality follows by rewriting the double sum as
\begin{equation*}
  \frac{1}{n}\sum_{k=1}^n\E{|X_{k}|^2}/\sigma^2
  \sum_{\ell=0}^{n-k}\alpha_{\ell},
\end{equation*}
and by upper bounding then $\sum_{\ell=0}^{k-n}\alpha_{\ell}\leq
\alpha$ and using the power constraint \eqref{eq:power}.

Combining \eqref{eq:upperjensen} and \eqref{eq:UPPER} with
\eqref{eq:capacity}, we obtain the upper bound
\begin{IEEEeqnarray}{lCl}
  C(\SNR) & \leq & -\inf_{\ell\in\set{L}}
  \left(h_{\ell}-\alpha_{\ell}\right) +
  \log\bigl(1+\log(1+\alpha\,\SNR)\bigr)+\Psi(\SNR) + \log\pi +
  1.\label{eq:LastFinite}
\end{IEEEeqnarray}
It follows by \cite[Eq.~(337)]{lapidothmoser03_3} that
\begin{equation}
  \label{eq:finitetozero}
  \lim_{\SNR\to\infty} \Psi(\SNR) = \lim_{\xi\downarrow 0}
  \Bigl\{\log\Gamma(\xi)-\log\frac{1}{\xi}-\xi\log\xi\Bigr\} = 0.
\end{equation}
Noting that
\begin{equation*}
  \lim_{\SNR\to\infty}
  \Bigl\{\log\bigl(1+\log(1+\alpha\,\SNR)\bigr)-\log\log\SNR\Bigr\}
  = 0,
\end{equation*}
we obtain from \eqref{eq:LastFinite} and \eqref{eq:finitetozero} the desired result
\begin{equation}
  \varlimsup_{\SNR\to\infty}
  \bigl\{C(\SNR)-\log\log\SNR\bigr\} \leq 1 +
  \log\pi -\inf_{\ell\in\set{L}}
  \left(h_{\ell}-\alpha_{\ell}\right).
\end{equation}

\section{Proofs of the Achievability Results}
\label{sec:lowerproof}

In Section~\ref{sub:lower}, we derive the lower bound on channel
capacity that is presented in Proposition~\ref{prop:lowerbound}. This
lower bound will be used in Sections~\ref{sub:neccondbounded} \& \ref{sub:preloglog} to prove
Part~(ii) of Theorem~\ref{thm:bounded} and to prove the direct part of
Theorem~\ref{thm:finite}, respectively.

\subsection{Lower Bound}
\label{sub:lower}

To derive the desired lower bound on capacity, we evaluate
$\frac{1}{n}I(X_1^n;Y_1^n)$ for the following distribution on the
inputs $\{X_k\}$.

Let $\const{L}(\const{P})$ be such that
\begin{equation}
  \label{eq:L}
  \sum_{\ell=\const{L}(\const{P})+1}^\infty \alpha_{\ell}\,\const{P} \leq \sigma^2.
\end{equation}
To shorten notation, we shall write in the following $\const{L}$ instead
of $\const{L}(\const{P})$. Let $\tau\in\Naturals$ be some positive
integer that possibly depends on $\const{L}$, and let
$\vect{X}_b=(X_{b(\const{L}+\tau)+1},\ldots,X_{(b+1)(\const{L}+\tau)})$.
We choose $\{\vect{X}_b\}$ to be IID with
\begin{equation*}
  \vect{X}_b = \bigl(\underbrace{0,\ldots,0}_{\const{L}},\tilde{X}_{b\tau+1},\ldots,\tilde{X}_{(b+1)\tau}\bigr),
\end{equation*}
where $\tilde{X}_{b\tau+1},\ldots,\tilde{X}_{(b+1)\tau}$ is a sequence
of independent, zero-mean, circularly-symmetric, complex random variables with
$\log|\tilde{X}_{b\tau+\nu}|^2$ being uniformly distributed over the
interval $\bigl[\log \const{P}^{(\nu-1)/\tau},\log
\const{P}^{\nu/\tau}\bigr]$, i.e., for each $\nu=1,\ldots,\tau$
\begin{equation*}
  \log |\tilde{X}_{b\tau+\nu}|^2 \sim
  \Uniform{\bigl[\log\const{P}^{(\nu-1)/\tau},\log
  \const{P}^{\nu/\tau}\bigr]}.
\end{equation*}
(Here and throughout this proof we assume that $\const{P}>1$.)

Let $\kappa\triangleq \lfloor \frac{n}{\const{L}+\tau}\rfloor$ (where
$\lfloor a \rfloor$ denotes the largest integer that is less
than or equal to $a$), and
let $\vect{Y}_b$ denote the vector
$(Y_{b(L+\tau)+1},\ldots,Y_{(b+1)(\const{L}+\tau)})$. By the chain
rule for mutual information we have
\begin{IEEEeqnarray}{lCl}
  I\bigl(X_1^n;Y_1^n\bigr) & \geq &
  I\bigl(\vect{X}_0^{\kappa-1};\vect{Y}_0^{\kappa-1}\bigr)\nonumber\\
  & = & \sum_{b=0}^{\kappa-1}
  I\bigl(\vect{X}_b;\vect{Y}_0^{\kappa-1}\bigm|\vect{X}_0^{b-1}\bigr)\nonumber\\
  & \geq & \sum_{b=0}^{\kappa-1} I(\vect{X}_b;\vect{Y}_b),\label{eq:proof1}
\end{IEEEeqnarray}
where the first inequality follows by restricting the number of
observables; and where the last inequality follows by restricting the
number of observables and by noting that $\{\vect{X}_b\}$ is IID.

We continue by lower bounding each summand on the RHS of
\eqref{eq:proof1}. We use again the chain rule and that reducing
observations cannot increase mutual information to obtain
\begin{IEEEeqnarray}{lCl}
  I(\vect{X}_b;\vect{Y}_b) & = & \sum_{\nu=1}^{\tau}
  I\bigl(\tilde{X}_{b\tau+\nu};\vect{Y}_b\bigm|\tilde{X}_{b\tau+1}^{b\tau+\nu-1}\bigr)\nonumber\\
  & \geq & \sum_{\nu=1}^{\tau} I\bigl(\tilde{X}_{b\tau+\nu};Y_{b(\const{L}+\tau)+\const{L}+\nu}\bigm|\tilde{X}_{b\tau+1}^{b\tau+\nu-1}\bigr)\nonumber\\
  & \geq & \sum_{\nu=1}^{\tau} I\bigl(\tilde{X}_{b\tau+\nu};Y_{b(\const{L}+\tau)+\const{L}+\nu}\bigr),\label{eq:proof1b}
\end{IEEEeqnarray}
where we have additionally used in the last inequality that
$\tilde{X}_{b\tau+1},\ldots,\tilde{X}_{(b+1)\tau}$ are independent.

Defining
\begin{IEEEeqnarray}{lCl}
  W_{b\tau+\nu} & \triangleq &
  \sum_{\ell=1}^{b(\const{L}+\tau)+\const{L}+\nu-1}
  H_{b(\const{L}+\tau)+\const{L}+\nu}^{(\ell)}
  X_{b(\const{L}+\tau)+\const{L}+\nu-\ell} + Z_{b(\const{L}+\tau)+\const{L}+\nu},
\end{IEEEeqnarray}
each summand on the RHS of \eqref{eq:proof1b} can be written as
\begin{IEEEeqnarray}{lCl}
  I\bigl(\tilde{X}_{b\tau+\nu};Y_{b(\const{L}+\tau)+\const{L}+\nu}\bigr)
  & = & I\bigl(\tilde{X}_{b\tau+\nu};H_{b(\const{L}+\tau)+\const{L}+\nu}^{(0)}\tilde{X}_{b\tau+\nu}+W_{b\tau+\nu}\bigr).\IEEEeqnarraynumspace\label{eq:proof2}
\end{IEEEeqnarray}
A lower bound on \eqref{eq:proof2} follows from the following lemma.
\begin{lemma}
  \label{lemma}
  Let the random variables $X$, $H$, and $W$ have finite second
  moments. Assume that both $X$ and $H$ are of finite differential
  entropy. Finally, assume that $X$ is independent of $H$; that
  $X$ is independent of $W$; and that $X \markov H \markov W$ forms a
  Markov chain. Then
  \begin{IEEEeqnarray}{lCl}
    I(X;HX+W) & \geq & h(X)-\E{\log|X|^2}+\E{\log|H|^2} - \E{\log\biggl(\pi e\biggl(\sigma_H+\frac{\sigma_W}{|X|}\biggr)^2\biggr)},\IEEEeqnarraynumspace
  \end{IEEEeqnarray}
  where $\sigma^2_H\geq 0$ and $\sigma^2_H>0$ denote the variances of\/
  $W$ and $H$. (Note that the assumptions that $X$ and $H$ have finite
  second moments and are of finite differential entropy guarantee
  that $\E{\log|X|^2}$ and $\E{\log|H|^2}$ are finite, see \textnormal{\cite[Lemma
  6.7e]{lapidothmoser03_3}}.)
\end{lemma}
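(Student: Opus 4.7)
The plan is to split
\begin{equation*}
  I(X;HX+W) = h(HX+W) - h(HX+W\mid X)
\end{equation*}
and bound the two differential entropies separately. A preliminary observation is that the pairwise independences together with the Markov chain $X \markov H \markov W$ imply that $X$ is jointly independent of $(H,W)$, since $P(X,H,W) = P(X)P(H\mid X)P(W\mid H,X) = P(X)P(H)P(W\mid H) = P(X)P(H,W)$. Consequently, for every $x$ in the support of $X$, $h(HX+W \mid X = x) = h(Hx+W)$ where the expectation is under the original law of $(H,W)$.

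For the upper bound on the conditional entropy, I would first use the scaling behaviour of complex differential entropy \cite[Thm.~9.6.4]{coverthomas91} to rewrite $h(Hx+W) = h(H+W/x) + \log|x|^2$ for $x \neq 0$, and then invoke the fact that the circularly-symmetric Gaussian maximises differential entropy among complex random variables of a given variance to obtain
\begin{equation*}
  h(H+W/x) \leq \log\bigl(\pi e\,\Var{H+W/x}\bigr).
\end{equation*}
A direct computation together with the Cauchy--Schwarz inequality for covariance yields
\begin{equation*}
  \Var{H+W/x} \leq \sigma_H^2 + \frac{2\sigma_H\sigma_W}{|x|} + \frac{\sigma_W^2}{|x|^2} = \Bigl(\sigma_H + \frac{\sigma_W}{|x|}\Bigr)^2.
\end{equation*}
Since $X$ has finite differential entropy, $\Pr(X=0)=0$, so averaging over $X$ is legitimate and produces
\begin{equation*}
  h(HX+W \mid X) \leq \E{\log|X|^2} + \E{\log\bigl(\pi e(\sigma_H+\sigma_W/|X|)^2\bigr)}.
\end{equation*}

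For the lower bound on $h(HX+W)$, I would exploit that conditioning cannot increase differential entropy \cite[Thm.~9.6.1]{coverthomas91}: $h(HX+W) \geq h(HX+W \mid H,W)$. Conditional on $H=h$ and $W=w$, the output is the affine function $hX+w$ of $X$, and translation invariance combined with the scaling rule $h(hX)=h(X)+\log|h|^2$ gives $h(HX+W\mid H=h,W=w) = h(X) + \log|h|^2$. Averaging over $(H,W)$ then yields $h(HX+W) \geq h(X) + \E{\log|H|^2}$. Subtracting the earlier upper bound from this lower bound produces exactly the claimed inequality. There is no substantive obstacle; the only care required is the finiteness of $\E{\log|X|^2}$ and $\E{\log|H|^2}$, which the hint following the statement supplies via \cite[Lemma~6.7e]{lapidothmoser03_3}.
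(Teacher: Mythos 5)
Your proof is correct, but note that the paper itself contains no in-house proof of this lemma: its entire ``proof'' is the citation to \cite[Lemma~4]{lapidoth05_2}, so you have in effect reconstructed the argument of that external source, and your reconstruction matches the standard derivation there. Your route --- the decomposition $I(X;HX+W)=h(HX+W)-h(HX+W\mid X)$, the lower bound $h(HX+W)\geq h(HX+W\mid H,W)=h(X)+\E{\log|H|^2}$ via translation invariance and complex scaling, and the upper bound $h(HX+W\mid X=x)\leq \log|x|^2+\log\bigl(\pi e(\sigma_H+\sigma_W/|x|)^2\bigr)$ via the circularly-symmetric Gaussian max-entropy bound combined with Cauchy--Schwarz on the covariance term --- is exactly the mechanism that produces the $\bigl(\sigma_H+\sigma_W/|X|\bigr)^2$ structure in the statement. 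Two details in your write-up deserve explicit credit: first, your preliminary observation that the two pairwise independences plus the Markov chain $X\markov H\markov W$ force $X$ to be \emph{jointly} independent of $(H,W)$ is the step that legitimizes both conditional-entropy evaluations (and it incidentally shows the hypothesis that $X$ is independent of $W$ is redundant, being implied by the other two assumptions); second, you correctly use finiteness of differential entropy to rule out atoms at zero for $X$ (the same remark disposes of $H=0$ in the lower bound, where $\log|h|^2$ would otherwise be $-\infty$ on a null set, harmless since $\E{\log|H|^2}$ is finite by \cite[Lemma~6.7e]{lapidothmoser03_3}). The only point you leave tacit is the validity of the mutual-information decomposition itself, which is safe here because $\E{|HX+W|^2}<\infty$ gives $h(HX+W)\leq\log\bigl(\pi e\,\E{|HX+W|^2}\bigr)<\infty$; and if the averaged upper bound on $h(HX+W\mid X)$ happens to be $+\infty$, the claimed inequality holds vacuously since its right-hand side is then $-\infty$. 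In short: no gap, and for a reader without access to \cite{lapidoth05_2} your self-contained argument is arguably more useful than the paper's bare citation.
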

\begin{proof}
  See \cite[Lemma 4]{lapidoth05_2}.
\end{proof}
It can be easily verified that for the channel model given in
Section~\ref{sec:channelmodel} and for the above coding scheme the lemma's
conditions are satisfied. We therefore obtain from Lemma~\ref{lemma}
\begin{IEEEeqnarray}{lCl}
  I\bigl(\tilde{X}_{b\tau+\nu};H_{b(\const{L}+\tau)+\const{L}+\nu}^{(0)}\tilde{X}_{b\tau+\nu}+W_{b\tau+\nu}\bigr)
  & \geq & h\bigl(\tilde{X}_{b\tau+\nu}\bigr)
  -\E{\log|\tilde{X}_{b\tau+\nu}|^2} +
  \E{\log\bigl|H_{b(\const{L}+\tau)+\const{L}+\nu}^{(0)}\bigr|^2}\nonumber\\
  & & {} - \E{\log\biggl(\pi e\biggl(\sqrt{\alpha_0}+\frac{\sqrt{\E{|W_{b\tau+\nu}|^2}}}{|\tilde{X}_{b\tau+\nu}|}\biggr)^2\biggr)}.\label{eq:proof3}
\end{IEEEeqnarray}
Using that the differential entropy of a circularly-symmetric random
variable is given by (see \cite[Eqs.~(320) \&
(316)]{lapidothmoser03_3})
\begin{equation}
  h\bigl(\tilde{X}_{b\tau+\nu}\bigr) =
  \E{\log|\tilde{X}_{b\tau+\nu}|^2}+h\bigl(\log|\tilde{X}_{b\tau+\nu}|^2\bigr)
  + \log\pi,
\end{equation}
and evaluating $h(\log|\tilde{X}_{b\tau+\nu}|^2)$ for our choice of
$\tilde{X}_{b\tau+\nu}$, yields for the first two terms on the RHS of
\eqref{eq:proof3}
\begin{equation}
  \label{eq:proof3b}
  h\bigl(\tilde{X}_{b\tau+\nu}\bigr)
  -\E{\log|\tilde{X}_{b\tau+\nu}|^2} = \log\log\const{P}^{1/\tau} +\log\pi.
\end{equation}
We next upper bound
\begin{IEEEeqnarray}{lCl}
  \frac{\E{|W_{b\tau+\nu}|^2}}{|\tilde{X}_{b\tau+\nu}|^2} & = & 
  \sum_{\ell=1}^{\const{L}} \alpha_{\ell}
  \frac{\E{|X_{b(\const{L}+\tau)+\const{L}+\nu-\ell}|^2}}{|\tilde{X}_{b\tau+\nu}|^2}
  + \sum_{\ell=\const{L}+1}^{b(\const{L}+\tau)+\const{L}+\nu-1} \alpha_{\ell}
  \frac{\E{|X_{b(\const{L}+\tau)+\const{L}+\nu-\ell}|^2}}{|\tilde{X}_{b\tau+\nu}|^2}\nonumber\\
  & & {} +\frac{\sigma^2}{|\tilde{X}_{b\tau+\nu}|^2}.\IEEEeqnarraynumspace\label{eq:proof4}
\end{IEEEeqnarray}
To this end, we note that for our choice of $\{X_k\}$ and by the
assumption that $\const{P}>1$, we have
\begin{equation}
  \E{|X_{\ell}|^2} \leq
  \const{P},\quad \ell\in\Naturals,
\end{equation}
\begin{equation}
\E{|X_{b(\const{L}+\tau)+\const{L}+\nu-\ell}|^2}\leq
\const{P}^{(\nu-\ell)/\tau}, \quad \ell=1,\ldots,\const{L},
\end{equation}
 and
\begin{equation}
  \label{eq:proof4a}
  |\tilde{X}_{b\tau+\nu}|^2\geq\const{P}^{(\nu-1)/\tau}\geq 1,
\end{equation}
from which we obtain
\begin{equation}
  \label{eq:proof4b}
  \frac{\E{|X_{b(\const{L}+\tau)+\const{L}+\nu-\ell}|^2}}{|\tilde{X}_{b\tau+\nu}|^2}
  \leq
  \frac{\const{P}^{(\nu-\ell)/\tau}}{\const{P}^{(\nu-1)/\tau}}\leq 1,
  \quad \ell=1,\ldots,\const{L}
\end{equation}
and
\begin{equation}
  \label{eq:proof4c}
  \frac{\E{|X_{b(\const{L}+\tau)+\const{L}+\nu-\ell}|^2}}{|\tilde{X}_{b\tau+\nu}|^2}
  \leq \const{P}, \quad \ell=\const{L}+1,\ldots,b(\const{L}+\tau)+\const{L}+\nu-1.
\end{equation}
Applying \eqref{eq:proof4a}--\eqref{eq:proof4c} to \eqref{eq:proof4}
yields
\begin{IEEEeqnarray}{lCl}
  \frac{\E{|W_{b\tau+\nu}|^2}}{|\tilde{X}_{b\tau+\nu}|^2}
  & \leq & \sum_{\ell=1}^{\const{L}}\alpha_{\ell} +
  \sum_{\ell=\const{L}+1}^{b(\const{L}+\tau)+\const{L}+\nu-1}
  \alpha_{\ell} \, \const{P}+\sigma^2\nonumber\\
  & \leq & \alpha +
  \sum_{\ell=\const{L}+1}^{\infty}
  \alpha_{\ell} \, \const{P} + \sigma^2 \nonumber\\
  & \leq & \alpha + 2\sigma^2,\label{eq:proof5}
\end{IEEEeqnarray}
with $\alpha$ being defined in \eqref{eq:alphasumlower}. Here
the second inequality follows because $\alpha_{\ell}$,
$\ell\in\Naturals_0$ and $\const{P}$
are nonnegative, and the last inequality follows from
\eqref{eq:L}.

By combining \eqref{eq:proof3} with \eqref{eq:proof3b} \&
\eqref{eq:proof5}, and by noting that by the stationarity of
$\bigl\{H_k^{(0)},\;k\in\Naturals\bigr\}$
\begin{equation*}
  \E{\log\bigl|H_{b(\const{L}+\tau)+\const{L}+\nu}^{(0)}\bigr|^2} =
  \E{\log\bigl|H_{1}^{(0)}\bigr|^2},
\end{equation*}
we obtain the lower bound
\begin{IEEEeqnarray}{lCl}
  I\bigl(\tilde{X}_{b\tau+\nu};H_{b(\const{L}+\tau)+\const{L}+\nu}^{(0)}\tilde{X}_{b\tau+\nu}+W_{b\tau+\nu}\bigr)
  & \geq & \log\log\const{P}^{1/\tau} +
  \E{\log\bigl|H_{1}^{(0)}\bigr|^2} - 1 \nonumber\\
  & & {} - 2\log\bigl(\sqrt{\alpha_0}+\sqrt{\alpha + 2\sigma^2}\bigr).\IEEEeqnarraynumspace\label{eq:proof6}
\end{IEEEeqnarray}
Note that the RHS of \eqref{eq:proof6} neither depends on $\nu$ nor on
$b$. We therefore have from \eqref{eq:proof6}, \eqref{eq:proof1b}, and
\eqref{eq:proof1}
\begin{equation}
  I\bigl(X_1^n;Y_1^n\bigr) \geq \kappa\tau \log\log\const{P}^{1/\tau}
  + \kappa\tau \Upsilon,\label{eq:proof7}
\end{equation}
where we define $\Upsilon$ as
\begin{equation}
  \label{eq:upsilon}
  \Upsilon \triangleq \E{\log\bigl|H_{1}^{(0)}\bigr|^2} - 1 - 2\log\bigl(\sqrt{\alpha_0}+\sqrt{\alpha + 2\sigma^2}\bigr).
\end{equation}
Dividing the RHS of \eqref{eq:proof7} by $n$, and computing the limit
as $n$ tends to infinity, yields the lower bound
\begin{equation}
  \label{eq:lowerbound}
  C(\SNR) \geq \frac{\tau}{\const{L}+\tau}
  \log\log\const{P}^{1/\tau}+\frac{\tau}{\const{L}+\tau}\Upsilon,
  \quad \const{P}>1,
\end{equation}
where we have used that
$\lim_{n\to\infty}\kappa/n=1/(\const{L}+\tau)$. This proves
Proposition~\ref{prop:lowerbound}.

\subsection{Condition for Unbounded Capacity}
\label{sub:neccondbounded}
We use Proposition~\ref{prop:lowerbound} to prove Part~(ii) of
 Theorem~\ref{thm:bounded}. In particular, we show that if 
\begin{equation}
  \label{eq:conditionThm1ii}
  \lim_{\ell\to\infty} \frac{1}{\ell}\log\frac{1}{\alpha_{\ell}} = \infty,
\end{equation}
then, by cleverly choosing $\const{L}(\const{P})$ and $\tau$, the
lower bound \eqref{eq:proplowerbound}, namely,
\begin{equation*}
  C(\SNR) \geq \frac{\tau}{\const{L}(\const{P})+\tau}
  \log\log\const{P}^{1/\tau}+\frac{\tau}{\const{L}(\const{P})+\tau}\Upsilon,
  \quad \const{P}>1
\end{equation*}
(where $\Upsilon$ is defined in \eqref{eq:upsilon}), can be made
arbitrarily large as $\SNR$ tends to
infinity. To this end, we first note that \eqref{eq:conditionThm1ii}
implies that for every $0<\varrho<1$ we can find an
$\ell_0\in\Naturals$ such that
\begin{equation}
  \label{eq:conditionThmii2}
  \alpha_{\ell}<\varrho^{\ell}, \quad \ell\geq\ell_0.
\end{equation}
By choosing
\begin{equation}
  \label{eq:Lchoice}
  \const{L}(\const{P}) =
  \left\lceil\frac{\log\bigl(\const{P}/\sigma^2\,\varrho/(1-\varrho)\bigr)}{\log(1/\varrho)}\right\rceil
\end{equation}
(where $\lceil a \rceil$ denotes the smallest integer that is greater
than or equal to $a$) and $\tau=\const{L}(\const{P})$, we obtain from
\eqref{eq:proplowerbound} the lower bound
\begin{IEEEeqnarray}{lCl}
  C(\SNR) & \geq & \frac{1}{2}
  \log\frac{\log\const{P}}{\biggl\lceil\frac{\log\bigl(\const{P}/\sigma^2\,\varrho/(1-\varrho)\bigr)}{\log(1/\varrho)}\biggr\rceil}
  + \frac{1}{2}\Upsilon, \quad \const{P}>1.
\end{IEEEeqnarray}
Taking the limit as $\SNR$ (and hence also $\const{P}=\sigma^2\SNR$) tends to
infinity, yields
\begin{equation}
  \lim_{\SNR\to\infty} C(\SNR) \geq \frac{1}{2}
  \log\log\frac{1}{\varrho} + \frac{1}{2} \Upsilon.
\end{equation}
Since this holds for every $0<\varrho<1$
\begin{equation}
  \sup_{\SNR>0} C(\SNR) = \infty.
\end{equation}

It remains to show that $\{\alpha_{\ell}\}$ and our choice of
$\const{L}(\const{P})$ \eqref{eq:Lchoice} satisfy the conditions
\eqref{eq:alphasumlower} \& \eqref{eq:Lcondition} of
Proposition~\ref{prop:lowerbound}, namely,
\begin{equation*}
  \sum_{\ell=0}^{\infty} \alpha_{\ell} < \infty \qquad
  \textnormal{and} \qquad \sum_{\ell=\const{L}(\const{P})+1}^{\infty}
  \alpha_{\ell}\,\const{P} \leq \sigma^2.
\end{equation*}
It follows immediately from \eqref{eq:alpha} and \eqref{eq:conditionThmii2} that
$\{\alpha_{\ell}\}$ satisfies the first condition \eqref{eq:alphasumlower}:
\begin{equation}
  \sum_{\ell=0}^{\infty} \alpha_{\ell} = \sum_{\ell=0}^{\ell_0-1}
  \alpha_{\ell}+\sum_{\ell=\ell_0}^{\infty}\alpha_{\ell} < \ell_0\,\sup_{\ell\in\Naturals_0}\alpha_{\ell}+\sum_{\ell=\ell_0}^{\infty}
  \varrho^{\ell} = \ell_0\,\sup_{\ell\in\Naturals_0}\alpha_{\ell} + \frac{\varrho^{\ell_0}}{1-\varrho} < \infty.
\end{equation}
In order to show that $\const{L}(\const{P})$ satisfies the second
condition \eqref{eq:Lcondition}, we first note that by
\eqref{eq:conditionThmii2}
\begin{equation}
  \sum_{\ell=\ell'+1}^{\infty} \alpha_{\ell} <
  \sum_{\ell=\ell'+1}^{\infty} \varrho^{\ell} =
  \varrho^{\ell'}\frac{\varrho}{1-\varrho}, \quad \ell'\geq\ell_0-1.\label{eq:Thm1ii1}
\end{equation}
Since $\const{L}(\const{P})$ tends to infinity as
$\const{P}\to\infty$, it follows that $\const{L}(\const{P})$ is
greater than $(\ell_0-1)$  for sufficiently large
$\const{P}$. Furthermore, \eqref{eq:Lchoice} implies
\begin{equation}
  \label{eq:smallersigma}
  \varrho^{\const{L}(\const{P})}\frac{\varrho}{1-\varrho}\,\const{P}\leq
  \sigma^2.
\end{equation}
We therefore obtain from \eqref{eq:Thm1ii1} and \eqref{eq:smallersigma}
\begin{equation}
  \sum_{\ell=\const{L}(\const{P})+1}^{\infty} \alpha_{\ell}\,\const{P} <
  \varrho^{\const{L}(\const{P})}\frac{\varrho}{1-\varrho}\,\const{P}\leq
  \sigma^2,
\end{equation}
thus demonstrating that $\const{L}(\const{P})$ satisfies
\eqref{eq:Lcondition}.

\subsection{The Pre-LogLog}
\label{sub:preloglog}
We use Proposition~\ref{prop:lowerbound} to prove
Theorem~\ref{thm:finite}. To this end, we first note that because the
number of paths is finite, we have for some $\const{L}\in\Naturals_0$
\begin{equation}
  \alpha_{\ell}=0, \quad \ell>\const{L},
\end{equation}
which implies that
\begin{equation}
  \sum_{\ell=0}^{\infty} \alpha_{\ell} = \sum_{\ell=0}^{\const{L}}
  \alpha_{\ell} \leq (\const{L}+1) \, \sup_{\ell\in\Naturals_0} \alpha_{\ell}
  < \infty
\end{equation}
and
\begin{equation}
  \sum_{\ell=\const{L}+1}^{\infty} \alpha_{\ell}\const{P} = 0 \leq
  \sigma^2.
\end{equation}
Consequently, it follows from \eqref{eq:proplowerbound} of
Proposition~\ref{prop:lowerbound} that the capacity is lower bounded
by
\begin{equation}
  C(\SNR) \geq \frac{\tau}{\const{L}+\tau} \log\log\const{P}^{1/\tau}
  + \frac{\tau}{\const{L}+\tau} \Upsilon, \quad \const{P}>1.\label{eq:lowerfinite}
\end{equation}
Dividing by $\log\log\SNR$, and computing the limit as
$\SNR\to\infty$, yields
\begin{equation}
  \varliminf_{\SNR\to\infty} \frac{C(\SNR)}{\log\log\SNR} \geq \frac{\tau}{\const{L}+\tau},
\end{equation}
where we have used that for any fixed $\tau$
\begin{equation*}
  \lim_{\SNR\to\infty} \frac{\log\log\const{P}^{1/\tau}}{\log\log\SNR}
  = 1.
\end{equation*}
The lower bound on the capacity pre-loglog
\begin{equation}
  \Lambda \triangleq \varlimsup_{\SNR\to\infty} \frac{C(\SNR)}{\log\log\SNR}
  \geq \varliminf_{\SNR\to\infty} \frac{C(\SNR)}{\log\log\SNR} \geq 1
\end{equation}
follows then by letting $\tau$ tend to infinity. Together with the
upper bound $\Lambda\leq 1$, which was derived in
Section~\ref{sub:upper2}, this proves Theorem~\ref{thm:finite}.

\section{Conclusion}
\label{sec:summary}
We studied the high-SNR behavior of the capacity of noncoherent
multipath fading channels. We demonstrated that, depending on the
decay rate of the sequence $\{\alpha_{\ell}\}$, capacity may be
bounded or unbounded in the SNR. We further showed that if the number of paths is finite,
then at high SNR capacity grows double-logarithmically with the SNR, and the
capacity pre-loglog is irrespective of the
number of paths. The picture that emerges is as follows:
\begin{itemize}
\item If the sequence of variances $\{\alpha_{\ell}\}$ decays
  exponentially or slower, then capacity is bounded in the SNR.
\item If the sequence of variances $\{\alpha_{\ell}\}$ decays faster
  than exponentially, then capacity is unbounded in the SNR.
\item If the number of paths is finite, then the capacity pre-loglog
  is equal to $1$, irrespective of the number of paths.
\end{itemize}

The conclusions that can be drawn from these results are
twofold. First, multipath channels with an infinite
number of paths and multipath channels with a finite number of paths
have in general completely different capacity behaviors at high
SNR. Indeed, at high SNR, if the number of paths is finite, then
capacity grows double-logarithmically with the SNR, whereas if
the number of paths is infinite, then capacity may even be bounded in
the SNR. Thus, while for low or for moderate SNR it might be reasonable
to approximate a multipath channel with infinitely many paths by
a multipath channel with only a finite number paths, this is not
reasonable when the SNR tends to infinity. The number of paths that are
needed to approximate a multipath channel typically depends on the $\SNR$ and
may grow to infinity as the $\SNR$ tends to infinity.

Second, the above results indicate that the high-SNR behavior of
the capacity of multipath fading channels depends critically on the
assumed channel model. Thus when studying such channels at high SNR,
the channel modeling is crucial, as slight changes in the channel
model might lead to completely different capacity results.

\section*{Acknowledgment}
Fruitful discussions with Helmut B\"olcskei and Giuseppe Durisi are
gratefully acknowledged. The authors also wish to thank Olivier
Leveque and Nihar Jindal for their comments, which were the inspiration
for the proof of Proposition~\ref{prop:lowerbound}.

\appendix

\section{Appendix to Section~\ref{sub:upper1}}
\label{app:EPI}
To prove \eqref{eq:entropy1}, we lower bound
\begin{equation}
  \label{app:expr}
  h\Biggl(\sum_{\ell=0}^{k-1}
      H_k^{(\ell)}X_{k-\ell}+Z_k\Biggm|X_1^n=x_1^n,\vect{H}_1^{k-1}=\vect{h}_1^{k-1}\Biggr)
\end{equation}
for a given $\vect{h}_1^{k-1}$, and
average then the result over $\vect{H}_1^{k-1}$. Let $\set{H}_k$ denote the set
\begin{equation}
  \set{H}_k \triangleq \bigl\{H_k^{(\ell)}, \ell=0,\ldots,k-1:\alpha_{\ell}=0\bigr\}.
\end{equation}
We have
\begin{IEEEeqnarray}{lCl}
  \IEEEeqnarraymulticol{3}{l}{h\Biggl(\sum_{\ell=0}^{k-1}
  H_k^{(\ell)}X_{k-\ell}+Z_k\Biggm|X_1^n=x_1^n,\vect{H}_1^{k-1}=\vect{h}_1^{k-1}\Biggr)}
  \nonumber\\
  \qquad & \geq & h\Biggl(\sum_{\ell=0}^{k-1}
  H_k^{(\ell)}X_{k-\ell}+Z_k\Biggm|X_1^n=x_1^n,\vect{H}_1^{k-1}=\vect{h}_1^{k-1},\set{H}_k\Biggr)\nonumber\\
  & = & h\Biggl(\sum_{\ell\in\set{S}_k}
  H_k^{(\ell)}X_{k-\ell}+Z_k\Biggm|X_1^n=x_1^n,\vect{H}_1^{k-1}=\vect{h}_1^{k-1},\set{H}_k\Biggr)\nonumber\\
  & \geq &
  \log\Biggl(\sum_{\ell\in\set{S}_k}e^{h\Bigl(H_k^{(\ell)}X_{k-\ell}
  \Bigm| X_1^n=x_1^n,\bigl\{H_{k'}^{(\ell)}\bigr\}_{k'=1}^{k-1}=\bigl\{h_{k'}^{(\ell)}\bigr\}_{k'=1}^{k-1}\Bigr)}+e^{h(Z_k)}\Biggr),
  \label{app:beforeexp}
\end{IEEEeqnarray}
where $\set{S}_k$ is defined in \eqref{eq:S}. Here the first
inequality follows because conditioning cannot increase differential
entropy; the following equality follows because differential entropy
is invariant under deterministic translation
\cite[Thm.~9.6.3]{coverthomas91}, and because the terms where
$x_{k-\ell}=0$ do not contribute to the sum; and the last inequality
follows by the entropy power inequality
\cite[Thm.~16.6.3]{coverthomas91}, and because the processes
\begin{equation*}
  \bigl\{H_k^{(0)}, \,k\in\Naturals\bigr\}, \bigl\{H_k^{(1)},\,k\in\Naturals\bigr\},\ldots
\end{equation*}
are independent. (Note that, for a given $\vect{H}_1^{k-1}=\vect{h}_1^{k-1}$, the
conditional entropies on the RHS of \eqref{app:beforeexp} are possibly
infinite. However, by \eqref{eq:finiteentropy} this event is of zero
probability and is therefore immaterial to \eqref{app:beforeexp} when
averaged over $\vect{H}_{1}^{k-1}$.)

Since the processes of the path gains are independent and jointly
independent of $X_1^n$, we can compute the expectation of
\eqref{app:beforeexp} over $\vect{H}_1^{k-1}$ by averaging
\eqref{app:beforeexp} first over $(H_{1}^{(0)},\ldots,H_{k-1}^{(0)})$,
then averaging the result over $(H_{1}^{(1)},\ldots,H_{k-1}^{(1)})$,
and so on. To lower bound the individual expectations, we note that
the function
\begin{equation}
  f(x) = \log\bigl(e^{x}+\zeta\bigr), \quad x\in\Reals
\end{equation}
is convex for all $\zeta>0$. Thus, by setting for each
$\ell'=0,\ldots,k-1$
\begin{IEEEeqnarray}{lCl}
  \zeta_{\ell'} & = & \sum_{\substack{\ell\in\set{S}_k,\\\ell<\ell'}}e^{h\Bigl(H_k^{(\ell)}X_{k-\ell}
        \Bigm|
        X_1^n=x_1^n,\bigl\{H_{k'}^{(\ell)}\bigr\}_{k'=1}^{k-1}\Bigr)}\nonumber\\
        & & {} + \sum_{\substack{\ell\in\set{S}_k,\\\ell>\ell'}}e^{h\Bigl(H_k^{(\ell)}X_{k-\ell}
        \Bigm|
        X_1^n=x_1^n,\bigl\{H_{k'}^{(\ell)}\bigr\}_{k'=1}^{k-1}=\bigl\{h_{k'}^{(\ell)}\bigr\}_{k'=1}^{k-1}\Bigr)}+e^{h(Z_k)},
\end{IEEEeqnarray}
it follows from Jensen's inequality
\begin{IEEEeqnarray}{lCl}
  \IEEEeqnarraymulticol{3}{l}{\E[\bigl\{H_{k'}^{(\ell')}\bigr\}_{k'=1}^{k-1}]{\log\Biggl(\I{\ell'\in\set{S}_k}e^{h\Bigl(H_k^{(\ell')}X_{k-\ell'}
  \Bigm| X_1^n=x_1^n,\bigl\{H_{k'}^{(\ell')}\bigr\}_{k'=1}^{k-1}=\bigl\{h_{k'}^{(\ell)}\bigr\}_{k'=1}^{k-1}\Bigr)}+\zeta_{\ell'}\Biggr)}}\nonumber\\
  \quad \qquad & \geq & \log\Biggl(\I{\ell'\in\set{S}_k}e^{h\Bigl(H_k^{(\ell')}X_{k-\ell'}
  \Bigm|
  X_1^n=x_1^n,\bigl\{H_{k'}^{(\ell')}\bigr\}_{k'=1}^{k-1}\Bigr)}+\zeta_{\ell'}\Biggr),
  \qquad \ell'=0,\ldots,k-1,\IEEEeqnarraynumspace\label{app:individual}
\end{IEEEeqnarray}
where $\I{\cdot}$ denotes the indicator function, i.e.,
\begin{equation}
  \I{\textnormal{statement}} = \left\{\begin{array}{ll} 1 \quad &
  \textnormal{if statement is true}\\0 & \textnormal{if statement is false.}\end{array}\right.
\end{equation}
Averaging \eqref{app:beforeexp} over $\vect{H}_1^{k-1}$, and employing
\eqref{app:individual} to compute this average, yields thus
\begin{IEEEeqnarray}{lCl}
  \IEEEeqnarraymulticol{3}{l}{h\Biggl(\sum_{\ell=0}^{k-1}
  H_k^{(\ell)}X_{k-\ell}+Z_k\Biggm|X_1^n=x_1^n,\vect{H}_1^{k-1}\Biggr)}
  \nonumber\\
  \qquad \qquad \qquad \qquad \qquad & \geq & \log\Biggl(\sum_{\ell\in\set{S}_k}e^{h\Bigl(H_k^{(\ell)}X_{k-\ell}
  \Bigm| X_1^n=x_1^n,\bigl\{H_{k'}^{(\ell)}\bigr\}_{k'=1}^{k-1}\Bigr)}+e^{h(Z_k)}\Biggr).
\end{IEEEeqnarray}
This proves the lower bound \eqref{eq:entropy1}.


\end{document}